\newtheorem{thm}{Theorem} 
\newtheorem{lem}{Lemma} 
\newtheorem{rem}{Remark} 
\def\cov{{\rm Cov}} 
\newcommand{\bbeta}{\mbox{\boldmath $\beta$}} 
\title{Hypothesis testing for quantitative trait locus effects in both location and scale in genetic backcross studies}
\author[1\authfn{1}]{Guanfu Liu}
\author[2\authfn{2}]{Pengfei Li}
\author[3\authfn{3}]{Yukun Liu}
\author[3\authfn{4}]{Xiaolong Pu}
\affil[1]{School of Statistics and Information, Shanghai University of International Business and Economics}
\affil[2]{Department of Statistics and Actuarial Science, University of Waterloo}
\affil[3]{Key Laboratory of Advanced Theory and Application in Statistics and Data Science - MOE,
School of Statistics, East China Normal University.}
\runningauthor{Liu, G. et al.}
\begin{document}

\maketitle

\begin{abstract}
Testing the existence of a quantitative trait locus (QTL) effect is an important task in QTL mapping studies. Most studies concentrate on the case where the phenotype distributions of different QTL groups follow normal distributions with the same unknown variance. In this paper we make a more general assumption that the phenotype distributions come from a location-scale distribution family. We derive the limiting distribution of the LRT for the existence of the QTL effect in both location and scale in genetic backcross studies. We further identify an explicit representation for this limiting distribution. As a complement, we study the limiting distribution of the LRT and its explicit representation for the existence of the QTL effect in the location only. The asymptotic properties of the LRTs under a local alternative are also investigated. Simulation studies are used to evaluate the asymptotic results, and a real-data example is included for illustration.
\keywords{Backcross study, Explicit representation, Likelihood ratio test, Limiting distribution, Quantitative trait locus}
\end{abstract}

\section{Introduction}

Quantitative trait locus (QTL) mapping is an important tool for
analyzing the genetic factors contributing to the variations
of quantitative traits in humans, plants, and animals.
A starting point of  QTL mapping studies is to test  the existence of a QTL effect.
If the QTL effect exists, we proceed to
identify the locations   and estimate the genetic effect of the QTLs.

A popular method  for testing the existence of a QTL effect
is the interval mapping method developed by \cite{Lander:1989}.
Consider a putative QTL, say {\bf Q},   located between two flanking markers
in a backcross design,   with {\bf M} and {\bf N}
being the left  and right flanking markers,  respectively.
For individuals in the backcross population,  the possible  genotypes  are $MM$ and $Mm$ at {\bf M},
$NN$ and $Nn$ at {\bf N}, and $QQ$ and $Qq$ at {\bf Q}.
Hence, the individuals in the backcross population have four marker genotypes:
$MM/NN$, $MM/Nn$, $Mm/NN$, or $Mm/Nn$.
For each individual, we can observe  the genotypes of the two
flanking markers  and the phenotype,
but we cannot observe the  QTL genotype, which must be inferred from the marker information.
The phenotype data can be divided into four groups according to their marker genotypes.
A testing method based on these data for detecting a QTL
in the interval {\bf M}--{\bf N} is referred to as  an interval mapping method.

In this paper, we  consider the backcross study described above and  assume that no interference or  double
recombination occurs between two marker-QTL intervals.
Let $f_1$ and $f_2$ be the phenotype probability density functions corresponding to
the two genotypes $QQ$ and $Qq$ respectively.
We use  $y_{11},\ldots,y_{1n_1}$, $y_{21},\ldots,y_{2n_2}$, $y_{31},\ldots,y_{3n_3}$, and $y_{41},\ldots,y_{4n_4}$ to denote the
phenotype data  corresponding to the marker genotypes $MM/NN$, $MM/Nn$, $Mm/NN$, and $Mm/Nn$, respectively.
 Denote by $r$ and $r_1$   the recombination frequencies between {\bf M} and {\bf N}, and between {\bf M} and {\bf Q}, respectively.
According to  \cite{Doerge:1997} and \cite{Wu:2007},
\begin{equation}
\label{model}
\begin{split}
&y_{1j}\overset{iid}{\sim} f_1(y), \quad j=1,\dots,n_1,\\
&y_{2j}\overset{iid}{\sim} \theta f_1(y)+(1-\theta) f_2(y), \quad j=1,\dots,n_2,\\
&y_{3j}\overset{iid}{\sim} (1-\theta) f_1(y)+\theta f_2(y), \quad j=1,\dots,n_3,\\
&y_{4j}\overset{iid}{\sim} f_2(y), \quad j=1,\dots,n_4,
\end{split}
\end{equation}
where  $\theta=r_1/r$. Since the two flanking markers are prespecified,
the recombination frequency $r$ is generally known. However,
the location of {\bf Q} is unknown, so neither
the recombination frequency $r_1$ nor   the parameter $\theta$ is known.
The problem of testing the existence of a QTL effect is equivalent to testing $H_0$:
$f_1=f_2$   versus  $H_1$: $f_1\neq f_2$.

The likelihood ratio test (LRT) has been commonly applied to test the existence of a QTL effect
when $f_1$ and $f_2$ in (\ref{model}) are from the same parametric distribution family \citep{Chen:2005, Wu:2007}.
Due to the mixture model structure in (\ref{model})
and the fact that $\theta$ appears under only the alternative model,
determining the critical values  of the LRT has been a long-standing problem
in its application.
Under the assumption that $f_1$ and $f_2$ are normal distributions with the same unknown variance,
\cite{Feingold:1993} and \cite{Rebai:1994} proposed approximation methods,
which do not have rigorous theoretical support,
to determine the critical values of the LRT.
Under the same setup, \cite{Chen:2005} rigorously showed that the LRT statistic converges in distribution
to the supremum of a chi-square process under the null hypothesis.
They further suggested using numerical approximation to determine the critical values of the LRT from its limiting distribution.
Simulation shows that their method provides more accurate  critical values  than those
of \cite{Feingold:1993} and \cite{Rebai:1994}.
Similar limiting distributions for the LRT were obtained
by \cite{Wu:2008} and \cite{Kim:2013} for the case
where
$f_1$ and $f_2$ are from a one-parameter (mean parameter) exponential family.
Under the same assumption on $f_1$ and $f_2$ as in \cite{Chen:2005},
\cite{Chang:2009} developed a score test
and  showed that the maximum of the squared score statistic is asymptotically equivalent to  the LRT statistic.

In summary, the aforementioned works concentrate on the QTL effect in the mean parameter only.
As \cite{Weller:1992} and \cite{Korol:1996} have pointed out,
a QTL effect may be economically more critical in the variance
than in the mean (e.g., for earliness, flowering time, ripening time under machine harvesting,
time for chickens to hatch, and seed dormancy).
To the best of our knowledge, \cite{Korol:1996} were the first
to take the variance effect into account in interval mapping studies:
they investigated the power of the LRT through simulations.
Their study was based on the known-QTL-location and
normality assumptions; no  theoretical results were provided
on how to determine critical values of the LRT.

In this paper, we fill the gap in genetic backcross studies by studying
the {LRT} procedure for the existence of a  QTL effect in both location
and scale with an unknown QTL location.
The normality assumption on $f_1$ and $f_2$ is quite natural, but  it can be restrictive.
We instead assume that $f_1$ and $f_2$ come from a general location-scale distribution family,
and they may have different locations and/or scales.
Testing $H_0: f_1=f_2$ under the above setup in (\ref{model}) is essentially testing homogeneity   in four samples.
There has been much research into the asymptotic properties
of the LRT for homogeneity in the mixture model in the one-sample case;
see \cite{Dacunha:1999},
\cite{Liu:2003}, \cite{Garel:2005}, \cite{Gu:2018}, and the references therein.
However, the mixture model with component densities from  a general location-scale distribution family
has some undesirable properties.
The likelihood function of the unknown parameters is unbounded
(\citealp{Hathaway:1985}), and the Fisher information on the mixing proportion
can be infinity (\citealp{Chen:2009}).
Because of these two nonregularities,
the existing elegant asymptotic results of the LRT for homogeneity in the one-sample mixture model
are not directly applicable to the mixture model with component densities from  a  general location-scale distribution family.
Taking advantage of the specific four-sample structure in (\ref{model}),
we successfully derive the asymptotic properties of the  maximum likelihood estimators (MLEs)
of the unknown parameters and the LRT statistic under the null hypothesis that no QTL exists.
To the best of our knowledge,  we are the first to consider the asymptotic results of the LRT for homogeneity
in  the mixture model with component densities from a  general location-scale distribution family.

We focus on the data structure in (\ref{model}) with
$f_1$ and $f_2$ from a general location-scale distribution family.
Our contributions can be summarized as follows.
First, we establish  the convergence rates  of the MLEs
of the location and scale parameters in $f_1$ and $f_2$,
and we show that the LRT statistic  converges in distribution to  the supremum of a $\chi^2_2$ process,
under the null hypothesis that no QTL exists.
We further identify an explicit representation of  the limiting distribution,
which can be used to rapidly calculate  the critical values of the LRT.
Second,
as a complement, we study the limiting distribution of the LRT and its explicit representation
for the existence of  a QTL effect in the location only
(i.e., $f_1$ and $f_2$ have the same unknown scale parameter).
Third, we derive  the local powers of the above two LRTs
under a series of local alternatives.  The local power results indicate
that  the two LRTs are consistent under the given local alternatives.
We emphasize that the existing  asymptotic results
under the normality assumption with a common unknown variance
rely on the assumption that  the mean parameter space is bounded
(\citealp{Chen:2005};  \citealp{Chang:2009}).
Our asymptotic results do not depend on this assumption  whether or not a QTL effect in the scale exists.
Fourth, based on the limiting distributions of two LRTs, the approximation of \cite{Davies:1987}
can be applied to approximate the critical values of the LRTs.
We conduct simulations to show that the  empirical type I errors of two LRTs based on the explicit representations
are quite close to the nominal levels, and they are closer than those based on Davies's method.
Further, the LRT in both the location and scale is uniformly more powerful than existing nonparametric tests such as
the multiple-sample Kolmogorov--Smirnov test (\citealp{Kiefer:1959}) and the multiple-sample Anderson--Darling test (\citealp{Scholz:1987}).

The paper is organized as follows.
Section 2 presents the large-sample properties of the MLEs
of the unknown parameters  and  the LRTs where
(1) $f_1$ and $f_2$ may have different locations and/or scales,
and (2) $f_1$ and $f_2$ have the same unknown scale.
We  give  explicit representations of  the limiting distributions of the LRTs in these two cases,
and we study their local powers   under a series of local alternatives.
Section 3 investigates the finite-sample performance of the LRTs via simulation studies,
and Section 4 analyzes a real-data set.
Section 5 concludes with a discussion.
For clarity, the proofs are postponed to the Appendix  or the supplementary material.

\section{Main results}

Suppose we have the observations
$\{y_{ij},~i=1,\ldots,4,~j=1,\ldots,n_i\}$
from  (\ref{model})
with $f_1(y)$ and $f_2(y)$ from the
same location-scale distribution family.
That is,
  $f_1(y)=f(y;\mu_1,\sigma_1)$ and $f_2(y)=f(y;\mu_2,\sigma_2)$
  with
  $f(y;\mu,\sigma)= \sigma^{-1}f\big((y-\mu)/\sigma;0,1 \big)$,
 where
 $f(\cdot; 0, 1)$ is a known probability density function,
 and $\mu$ and $\sigma$ are the location and scale parameters, respectively.
Under this setup,
testing the existence of the QTL effect in both location and scale
is equivalent to testing
\begin{equation}
H_0: (\mu_1,\sigma_1)=(\mu_2,\sigma_2).
\label{null1}
\end{equation}

\subsection{Asymptotic properties under the null}
The LRT is one of the most important tools in statistical inference, especially for
parametric models  (\citealp{Wilks:1938}; \citealp{Chernoff:1954}; \citealp{Self:1987}).
In this subsection, we establish the LRT statistics for testing (\ref{null1}).
Based on the observed data in  (\ref{model}),
the log-likelihood function of
$(\theta, \mu_1, \mu_2, \sigma_1, \sigma_2)$   is
\begin{align*}
l_n(\theta, \mu_1, \mu_2, \sigma_1, \sigma_2)
=&  \sum\limits_{j=1}^{n_1}\log f_1(y_{1j})
 +\sum\limits_{j=1}^{n_{2}}\log
 \{ \theta f_1(y_{2j})+(1-\theta)f_2(y_{2j}) \} \nonumber\\
& +\sum\limits_{j=1}^{n_{3}}\log \{ (1-\theta) f_1(y_{3j})
+\theta f_2(y_{3j}) \}
+\sum\limits_{j=1}^{n_{4}}\log f_2(y_{4j}).
\label{log-likelihood}
\end{align*}
The MLEs of the unknown parameters under the null and full models are  respectively
\begin{equation}
\label{MLE-null}
(\hat\mu_0,\hat\sigma_0)=\arg\max_{\mu,\sigma}
l_n(0.5,\mu,\mu,\sigma,\sigma)
\end{equation}
and
$$
(\hat\theta,\hat\mu_1,\hat\mu_2, \hat\sigma_1, \hat\sigma_2)
=
\arg\max_{\theta, \mu_1, \mu_2, \sigma_1, \sigma_2}
l_n(\theta, \mu_1, \mu_2, \sigma_1, \sigma_2).
$$
Then the LRT statistic for testing (\ref{null1}) is defined to be
$$
R_n=2\{l_n(\hat\theta,\hat\mu_1,\hat\mu_2, \hat\sigma_1, \hat\sigma_2)-l_n(0.5,\hat\mu_0,\hat\mu_0,\hat\sigma_0,\hat\sigma_0)\}.
$$
We reject $H_0$ when $R_n$ exceeds some critical value to be determined
by its limiting distribution.
In the definition of the MLEs in (\ref{MLE-null})
under the null model,  we arbitrarily set
the parameter $\theta$ to  0.5; other choices of $\theta$ do not change
the resulting  likelihood or   LRT, since
$\theta$ does not appear in the null model.

We define some  notation to ease the presentation of
the asymptotic properties of $(\hat\mu_1,\hat\mu_2, \hat\sigma_1, \hat\sigma_2)$ and $R_n$.
Let the total sample size be  $n=\sum_{i=1}^4n_i$.
We assume that the $n_i$'s go to $\infty$ at the same rate.
That is, $n_i/n$ goes to a constant $p_i$ with $p_i>0$, $i=1,\ldots,4$.
In the genetic backcross studies described in Section 1, the $p_i$'s are related to $r$, the recombination frequency between two markers {\bf M} and {\bf N}, in the following way (see \citealp{Wu:2007}):
$$(p_1,p_2,p_3,p_4)=\left(\frac{1-r}{2}, \frac{r}{2}, \frac{r}{2}, \frac{1-r}{2} \right).$$
Let $z_{hk}$ ($h, k=1,2$) be independent and identically distributed standard normal random
variables, and  define  for $h=1, 2$,
\begin{equation}
\label{zix}
Z_h(\theta)=\dfrac{\sqrt{1-r}}{ \sqrt{1+ 4r\theta (\theta-1)}} z_{h1}
+\dfrac{\sqrt{r}(2\theta-1)}{\sqrt{1+ 4r\theta(\theta-1)}} z_{h2},
\end{equation}
 where   $0\leq \theta\leq 1$.
It is clear that the $\{Z_h(\theta):  \theta\in [0, 1]\}$ ($h=1, 2$)
are independent and both are Gaussian processes with zero mean,
unit variance, and  covariance function
\begin{equation*}
\cov\big(Z_h(\theta_1), Z_h(\theta_2)\big)
= \dfrac{1 + r\{ 4\theta_1\theta_2-2(\theta_1+ \theta_2)\}}
{ \sqrt{\{ 4r\theta_1(\theta_1-1)+1 \}\{4r\theta_2
(\theta_2-1)+1 \} }}.
\end{equation*}
Let
$\gamma = \arccos\sqrt{1-r}$.  Define three sets of angles,
\begin{align}
&A_1=[-\gamma, \gamma]
\cup[\pi-\gamma,\pi]
\cup[-\pi,-\pi+\gamma],  \nonumber \\
&A_2=[\gamma,\pi/2]
\cup[-\pi+\gamma,-\pi/2], \nonumber \\
&A_3=[\pi/2,\pi-\gamma]
\cup[-\pi/2,-\gamma],
\label{set-A}
\end{align}
which form a partition of $[-\pi,\pi]$
and are depicted  in Figure \ref{figure1}.

\begin{center}
Figure 1 should be inserted here.
\end{center}

In Theorem \ref{lrt-mean-var} we establish the root-$n$ consistency of the MLE of $(\mu_1, \mu_2, \sigma_1, \sigma_2)$
   and   the limiting distribution of $R_n$.
For presentational continuity, we have put the
long proof  in the Appendix and the supplementary material.

\begin{thm}
\label{lrt-mean-var}
Suppose that $f(y;\mu,\sigma)$ satisfies  Conditions A1--A7 in the Appendix and
that  $n_i/n$ goes to $p_i \in (0, 1)$ as $n\to\infty$, $i=1,\ldots,4$.
Under the null distribution $f(y;\mu_0,\sigma_0)$, we have that
\begin{itemize}
\item[(i)] $ \hat\mu_h=\mu_0+O_p(n^{-1/2})$  and
$ \hat\sigma_h=\sigma_0+O_p(n^{-1/2})$, $h=1,2$;
\item[(ii)] as $n \rightarrow \infty$,  the LRT statistic $R_n\stackrel{d}\rightarrow
R=\sup_{0\leq \theta\leq 1}\{ Z^2_1(\theta)+Z^2_2(\theta)\}$,
where  $\stackrel{d}\rightarrow$ stands for convergence in distribution and
 $Z_1(\theta)$ and  $Z_2(\theta)$ are defined in \eqref{zix}.
\end{itemize}
Further, let $\rho_h^2$ with $\rho_h>0$ and $h=1,2$
be independent random variables from $\chi_2^2$.
Define  $\eta=(U_1+U_2)/2-(\pi/4)$, where $U_1$  and $U_2$, independent of $\rho_1^2$ and $\rho_2^2$,
are independent
random variables from the uniform distribution on $[-3\pi /4, 5\pi/4]$.
Then,
\begin{eqnarray}
\label{rep1}
\hspace{-0.1in}
R
\stackrel{d}{=}
\frac{\rho_1^2 +\rho_2^2}{2}   +
 \rho_1 \rho_2
\{ I_{A_1} (\eta)
+  I_{A_2} (\eta) \cos (2\eta-2\gamma)     + I_{A_3} (\eta) \cos (2\eta+2\gamma) \},
\end{eqnarray}
where $X\stackrel{d}{=}Y$ indicates that the
 two random variables $X$ and $Y$ have the
same distribution.
\end{thm}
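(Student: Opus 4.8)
The plan is as follows; I would prove the three assertions in order.

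\medskip
\noindent For part (i), the crucial structural fact is that the design \eqref{model} contains two \emph{pure} samples, $y_{1j}\sim f(\cdot;\mu_1,\sigma_1)$ and $y_{4j}\sim f(\cdot;\mu_2,\sigma_2)$, whose likelihood contributions alone identify $(\mu_1,\sigma_1)$ and $(\mu_2,\sigma_2)$. This is exactly the device that tames the two nonregularities of a general location--scale mixture: along any parameter sequence that sends the likelihood of the mixture samples $2$ and $3$ to infinity one is forced to take $\sigma_1\to0$ or $\sigma_2\to0$ with the corresponding location near a data point, and Conditions A1--A7 then force the pure-sample likelihood to zero at a strictly faster rate. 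So I would first establish consistency $(\hat\mu_h,\hat\sigma_h)\stackrel{p}{\to}(\mu_0,\sigma_0)$ by a classical (Wald-type) argument adapted to the noncompact parameter space and the mixture structure --- splitting the parameter space into a bounded region handled in the usual way and complementary regions where the likelihood is shown to lie below its value at the null point --- and then upgrade this to the $O_p(n^{-1/2})$ rate by a Taylor expansion of the profiled score equations on the resulting shrinking neighbourhood.

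\medskip
\noindent For part (ii), fix $\theta$ and expand $l_n(\theta,\mu_1,\mu_2,\sigma_1,\sigma_2)$ to second order about the true null value $(\mu_0,\sigma_0)$ in the local parameters $b_h=\sqrt n\,(\mu_h-\mu_0,\ \sigma_h-\sigma_0)$, $h=1,2$. The mixture structure makes the ``effective'' local parameter felt by the four samples equal to $b_1$, $\theta b_1+(1-\theta)b_2$, $(1-\theta)b_1+\theta b_2$ and $b_2$, so that after whitening by $I_0^{-1/2}$ (with $I_0$ the $2\times2$ Fisher information of the location--scale family) and writing $\tilde W_i\sim N(0,p_iI_2)$ for the whitened normalised scores of the four samples, independent, the log-likelihood ratio of the full model at $\theta$ (profiled over $\mu_1,\mu_2,\sigma_1,\sigma_2$) relative to the true null point is asymptotically a quadratic form in $(\tilde W_1,\dots,\tilde W_4)$ and the whitened deviations $\tilde b_1,\tilde b_2$. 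Passing to $s=\tilde b_1+\tilde b_2$ and $d=\tilde b_1-\tilde b_2$ decouples it, because $p_1=p_4$ and $p_2=p_3$ kill the $s$--$d$ cross term; maximising over $s$ reproduces $\|\tilde W_1+\cdots+\tilde W_4\|^2$, which is precisely what the null fit contributes and hence cancels in $R_n$, while maximising over $d$ leaves
\begin{equation*}
R_n\stackrel{d}{\longrightarrow}\sup_{0\le\theta\le1}\frac{\bigl\|(\tilde W_1-\tilde W_4)+(2\theta-1)(\tilde W_2-\tilde W_3)\bigr\|^2}{(1-r)+r(2\theta-1)^2}.
\end{equation*}
Since $\tilde W_1-\tilde W_4\sim N(0,(1-r)I_2)$ and $\tilde W_2-\tilde W_3\sim N(0,rI_2)$ are independent, calling the two coordinates of $(1-r)^{-1/2}(\tilde W_1-\tilde W_4)$ and $r^{-1/2}(\tilde W_2-\tilde W_3)$ by $z_{h1}$ and $z_{h2}$ and using $(1-r)+r(2\theta-1)^2=1+4r\theta(\theta-1)$ identifies the right-hand side with $\sup_{0\le\theta\le1}\{Z_1^2(\theta)+Z_2^2(\theta)\}$ from \eqref{zix}. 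The genuine work here is not the algebra but making the quadratic expansion valid \emph{uniformly} in $\theta\in[0,1]$ (including the endpoints) and proving tightness of $\theta\mapsto R_n(\theta)$; that is the main obstacle in this part.

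\medskip
\noindent For the explicit representation \eqref{rep1} I would work directly with $R=\sup_{0\le\theta\le1}\{Z_1^2(\theta)+Z_2^2(\theta)\}$. Using $1+4r\theta(\theta-1)=(1-r)+r(2\theta-1)^2$, the angle $\phi$ defined by $\cos\phi=\sqrt{1-r}\big/\sqrt{(1-r)+r(2\theta-1)^2}$ and $\sin\phi=(2\theta-1)\sqrt r\big/\sqrt{(1-r)+r(2\theta-1)^2}$ is a continuous bijection of $[0,1]$ onto $[-\gamma,\gamma]$, $\gamma=\arccos\sqrt{1-r}$, and it turns $Z_h(\theta)$ into $z_{h1}\cos\phi+z_{h2}\sin\phi$, so $R=\sup_{\phi\in[-\gamma,\gamma]}\sum_{h=1}^{2}(z_{h1}\cos\phi+z_{h2}\sin\phi)^2$. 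Writing $(z_{h1},z_{h2})=\rho_h(\cos\psi_h,\sin\psi_h)$, so that $\rho_1^2,\rho_2^2$ are i.i.d.\ $\chi_2^2$ and $\psi_1,\psi_2$ i.i.d.\ uniform on $(-\pi,\pi]$, all independent, and using $\cos^2=\frac12(1+\cos2\cdot)$ gives
\begin{equation*}
R=\frac{\rho_1^2+\rho_2^2}{2}+\frac12\sup_{\phi\in[-\gamma,\gamma]}\bigl\{\rho_1^2\cos(2\phi-2\psi_1)+\rho_2^2\cos(2\phi-2\psi_2)\bigr\}.
\end{equation*}
Combining the two cosines into $C\cos(2\phi-\omega)$ with $C=\bigl|\rho_1^2e^{2i\psi_1}+\rho_2^2e^{2i\psi_2}\bigr|$, the common rotation $(\psi_1,\psi_2)\mapsto(\psi_1+c,\psi_2+c)$ shows that $\omega$ is uniform on $(-\pi,\pi]$ and independent of $(\rho_1^2,\rho_2^2,C)$, and $\sup_{\phi\in[-\gamma,\gamma]}\cos(2\phi-\omega)$ equals $1$ when $|\omega|\le2\gamma$ and $\cos(|\omega|-2\gamma)$ otherwise.

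\medskip
\noindent Two distributional identities then finish the proof. First, conditioning on $s=\rho_1^2+\rho_2^2\sim\chi_4^2$ one gets $C^2=s^2\{1-4V(1-V)A\}$ with $V$ uniform on $(0,1)$ and $A$ arcsine on $(0,1)$, independent, whereas for fresh i.i.d.\ $\tilde\rho_1^2,\tilde\rho_2^2\sim\chi_2^2$ with the same sum $s$ one gets $(2\tilde\rho_1\tilde\rho_2)^2=s^2\{1-(2U-1)^2\}$ with $U$ uniform; comparing Mellin transforms, via the Gamma duplication formula, shows $4V(1-V)A\stackrel{d}{=}(2U-1)^2$, hence $(\rho_1^2+\rho_2^2,\ C)\stackrel{d}{=}(\tilde\rho_1^2+\tilde\rho_2^2,\ 2\tilde\rho_1\tilde\rho_2)$. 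Second, a case check against \eqref{set-A} shows that the bracket in \eqref{rep1}, namely $H(\eta):=I_{A_1}(\eta)+I_{A_2}(\eta)\cos(2\eta-2\gamma)+I_{A_3}(\eta)\cos(2\eta+2\gamma)$, equals $\cos\bigl(\max\{2\min(|\eta|,\pi-|\eta|)-2\gamma,\ 0\}\bigr)$; and since $\eta=(U_1+U_2)/2-\pi/4$ has the tent density $(\pi-|\eta|)/\pi^2$ on $[-\pi,\pi]$, an elementary CDF computation gives $2\min(|\eta|,\pi-|\eta|)\sim\mathrm{Unif}[0,\pi]$, which is the law of $|\omega|$, so $H(\eta)\stackrel{d}{=}\sup_{\phi\in[-\gamma,\gamma]}\cos(2\phi-\omega)$, independently of $(\rho_1^2,\rho_2^2)$. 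Assembling $R=\frac12(\rho_1^2+\rho_2^2)+\frac12\,C\,\sup_{\phi\in[-\gamma,\gamma]}\cos(2\phi-\omega)$ with these two identities and relabelling $\tilde\rho_h$ as $\rho_h$ yields \eqref{rep1}. The delicate points here are the independence bookkeeping in the rotation argument and the two identities above; the remainder is routine.
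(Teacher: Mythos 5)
Your plan for parts (i) and (ii) follows essentially the same route as the paper: consistency is obtained exactly as in the paper's Lemma 1 by exploiting the two pure samples (groups 1 and 4), the rate comes from the Taylor upper bound of Lemma 2, and your sum/difference reparametrization $s=\tilde b_1+\tilde b_2$, $d=\tilde b_1-\tilde b_2$ with the cross term killed by $p_1=p_4$, $p_2=p_3$ is precisely the paper's $\hat\bbeta_1,\hat\bbeta_2$ decomposition leading to the same quadratic form $\{\bm b(\theta)\}^\tau\{n\tau(\theta)\bm A\}^{-1}\bm b(\theta)$. The ``main obstacle'' you flag (uniformity in $\theta$ and tightness) is in fact milder than you suggest: since $\bm b(\theta)$ is affine in $\theta$ with a fixed-dimensional asymptotically normal coefficient vector, the supremum is a continuous functional of that vector and no process-level tightness is needed; and the paper avoids uniform-in-$\theta$ profile expansions altogether by a sandwich argument --- an upper bound valid at the MLE (whatever $\hat\theta$ is) plus achievability of that bound at an explicitly constructed $(\tilde\theta,\tilde\mu_1,\tilde\mu_2,\tilde\sigma_1,\tilde\sigma_2)$ --- which is the cleanest way to close the gap you identify. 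Where you genuinely diverge from the paper is the explicit representation \eqref{rep1}. The paper applies a polar transformation to the mixed combinations $(z_{11}-z_{22})/\sqrt2$, $(z_{12}+z_{21})/\sqrt2$, $(z_{11}+z_{22})/\sqrt2$, $(z_{21}-z_{12})/\sqrt2$, which diagonalizes the matrix $\bm W$ with eigenvalues $(\rho_1\pm\rho_2)^2/2$ and rotation angle $\eta=(U_1+U_2)/2-\pi/4$; the factor $\rho_1\rho_2=(\lambda_1-\lambda_2)/2$ and the triangular law of $\eta$ then appear with no further work. You instead use row-wise polar coordinates, reduce to $\tfrac12(\rho_1^2+\rho_2^2)+\tfrac12 C\sup_{\phi\in[-\gamma,\gamma]}\cos(2\phi-\omega)$, and must then establish two auxiliary identities: $(\rho_1^2+\rho_2^2,\,C)\stackrel{d}{=}(\tilde\rho_1^2+\tilde\rho_2^2,\,2\tilde\rho_1\tilde\rho_2)$, which reduces to $\mathrm{Beta}(1,1/2)\cdot\mathrm{Beta}(1/2,1/2)\stackrel{d}{=}\mathrm{Beta}(1/2,1)$ and checks out by Mellin transforms, and the fact that $2\min(|\eta|,\pi-|\eta|)$ is uniform on $[0,\pi]$ under the triangular density of $\eta$, which also checks out (as does your case analysis identifying the bracket in \eqref{rep1} with $\cos(\max\{2\min(|\eta|,\pi-|\eta|)-2\gamma,0\})$, and your formula $\sup_{\phi\in[-\gamma,\gamma]}\cos(2\phi-\omega)=\cos(\max\{|\omega|-2\gamma,0\})$, which survives the potential wrap-around at $\pm\pi$ because $|\omega|\le\pi$). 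Your route therefore buys a more transparent explanation of where each ingredient of \eqref{rep1} comes from, at the cost of two extra distributional identities that the paper's tailored transformation renders unnecessary.
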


Developing the asymptotic results of $R_n$ is technically challenging for two reasons.
First, $\theta$ is a nuisance parameter that appears only in the alternative model and
hence is not identifiable under the null model.
This invalidates many elegant asymptotic results for the classical
LRT method (\citealp{Wilks:1938}; \citealp{Chernoff:1954}; \citealp{Self:1987}).
Second, the presence of scale parameters in the model also complicates the derivation;
see the comments in  \cite{Chen:2003} and  \cite{Chen:2009}.
The log-likelihood functions for the second and third groups of observations are  unbounded.
Using the first and fourth groups of observations,
we are able to show in Lemma \ref{lem1} in the Appendix (see the supplementary material for a detailed proof)
that any estimator of $(\theta,\mu_1, \mu_2, \sigma_1, \sigma_2)$
with a large likelihood value is consistent for  $\mu_h$ and $\sigma_h$, $h=1,2$.
This implies the consistency of $\hat\mu_h$ and $\hat\sigma_h$ without the condition
that the parameter space for $\mu_h$ and/or $\sigma_h$ is bounded.

The limiting distribution of $R_n$ involves the supremum of the $\chi^2_2$ process.
It may be difficult to calculate the critical values in general (\citealp{Adler:1990}).
The explicit representation of the limiting distribution in (\ref{rep1}) considerably
eases this burden.
For a large number $N$, we can generate $N$ groups of $(\rho_1^2,\rho_2^2,U_1,U_2)$,
from which we obtain $N$ realizations of $R$: $R^{(1)},\ldots,R^{(N)}$.
The quantiles of $R$ can be well approximated by those of $R^{(1)},\ldots,R^{(N)}$.
This method provides a fast way to obtain the critical values of $R$:
it takes less than one minute when $N=100,000$.
The approximation method of \cite{Davies:1987} can be used to find
the approximate quantiles of the supremum of the $\chi^2_2$ process.
The simulation studies in Section 3.2 show that  our explicit representation
results in an LRT with a more accurate type I error rate than
that from Davies's method.

It is worth pointing out that the regularity conditions on $f(y;\mu,\sigma)$ are not restrictive.
The location-scale distributions generated by
the commonly used normal,  logistic, extreme-value, and $t$ distributions
all satisfy Conditions A1--A7; see  the supplementary material.
Hence, the results in Theorem~\ref{lrt-mean-var} are applicable to situations where  $f(y;\mu,\sigma)$ comes from any of these distributions.

As a comparison, we further consider the asymptotic properties of
the LRT test under the assumption that $\sigma_1=\sigma_2$.
The LRT test statistic in this case is defined as
$$
R_n^*=2\{l_n(\hat\theta^*,\hat\mu_1^*,\hat\mu_2^*, \hat\sigma^*, \hat\sigma^*)-l_n(0.5,\hat\mu_0,\hat\mu_0,\hat\sigma_0,\hat\sigma_0) \},
$$
where
$$
(\hat\theta^*,\hat\mu_1^*,\hat\mu_2^*, \hat\sigma^*)=
\arg\max_{\theta, \mu_1, \mu_2, \sigma}
l_n(\theta, \mu_1, \mu_2, \sigma, \sigma).
$$
We present  the asymptotic properties of $(\hat\theta^*,\hat\mu_1^*,\hat\mu_2^*, \hat\sigma^*)$ and $R_n^*$
in Theorem~\ref{lrt-mean}. Its proof is similar to that of Theorem~\ref{lrt-mean-var}
and is omitted to save  space.

\begin{thm}
\label{lrt-mean}
Assume the conditions of Theorem \ref{lrt-mean-var}.
Under the null distribution $f(y;\mu_0,\sigma_0)$, we have
\begin{itemize}
\item[(i)] $\hat\mu_h^*=\mu_0+O_p(n^{-1/2})$ ($h=1,2$) and
$ \hat\sigma^*=\sigma_0+O_p(n^{-1/2})$;
\item[(ii)] as $n\rightarrow \infty$,
the LRT statistic
$R_n^*\stackrel{d}\rightarrow
R^*=
\sup_{0\leq \theta\leq 1} \{ Z^2_1(\theta) \},
$
 where $Z_1(\theta)$  is defined in  \eqref{zix}.
\end{itemize}
Further, suppose $\rho^2$ and $\eta^*$ are two independent random variables that
follow  $\chi_2^2$ and  the uniform distribution on $[-\pi,\pi]$, respectively. Then
\begin{equation}
\label{rep2}
R^*
\stackrel{d}{=}
\rho^2 \{ I_{A_1} (\eta^*) +  I_{A_2} (\eta^*) \cos^2(\eta^*-\gamma) + I_{A_3} (\eta^*) \cos^2(\eta^*+\gamma) \},
\end{equation}
where the $A_i$'s are defined in \eqref{set-A}.
\end{thm}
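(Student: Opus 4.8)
The plan is to mirror the proof of Theorem~\ref{lrt-mean-var}, observing that the present statement is the special case in which the two scale parameters are tied, $\sigma_1=\sigma_2=\sigma$. Qualitatively, the scale contrast between the two mixture components then disappears, so that exactly one of the two independent processes that build $R$ in Theorem~\ref{lrt-mean-var} survives, namely $Z_1$, which is why $R^*=\sup_\theta Z_1^2(\theta)$ rather than $\sup_\theta\{Z_1^2(\theta)+Z_2^2(\theta)\}$.

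For part~(i), the first step is to establish the analogue of Lemma~\ref{lem1} for the restricted model: every $(\theta,\mu_1,\mu_2,\sigma)$ whose likelihood $l_n(\theta,\mu_1,\mu_2,\sigma,\sigma)$ is at least $l_n(0.5,\hat\mu_0,\hat\mu_0,\hat\sigma_0,\hat\sigma_0)$ must have $\mu_h\to\mu_0$ ($h=1,2$) and $\sigma\to\sigma_0$. As in Theorem~\ref{lrt-mean-var}, this uses only the first and fourth groups, whose densities $f(y;\mu_1,\sigma)$ and $f(y;\mu_2,\sigma)$ are regular, so the unboundedness of the group-2 and group-3 log-likelihoods is immaterial and no boundedness assumption on the location parameter space is required. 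Consistency of $(\hat\mu_1^*,\hat\mu_2^*,\hat\sigma^*)$ follows, and the $O_p(n^{-1/2})$ rates then come from the usual local quadratic expansion of $l_n$ around the null, exactly as in Theorem~\ref{lrt-mean-var}(i).

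For part~(ii), fix $\theta\in[0,1]$ and Taylor-expand $l_n(\theta,\mu_1,\mu_2,\sigma,\sigma)-l_n(0.5,\hat\mu_0,\hat\mu_0,\hat\sigma_0,\hat\sigma_0)$ in the root-$n$ quantities $u_h=\sqrt n(\mu_h-\mu_0)$ and $v=\sqrt n(\sigma-\sigma_0)$, using part~(i) to control the remainders. Because groups~2 and~3 carry mixture densities, their leading contributions depend on $(u_1,u_2)$ only through the averages $\theta u_1+(1-\theta)u_2$ and $(1-\theta)u_1+\theta u_2$ (and on $v$); assembling the four groups gives a quadratic form in $(u_1,u_2,v)$, and the restricted log-likelihood ratio for fixed $\theta$ equals, up to $o_p(1)$, its maximum over $(u_1,u_2,v)$ minus its maximum over the null slice $u_1=u_2$. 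Exactly as in the proof of Theorem~\ref{lrt-mean-var}, this difference picks out a single direction — here the location contrast $u_1-u_2$, the common location and the common scale $v$ contributing identically to both maxima — so that, the scale copy $Z_2$ being absent because $\sigma_1=\sigma_2$, the pointwise limit reduces to $Z_1^2(\theta)$; the coefficients $\sqrt{1-r}$, $\sqrt r(2\theta-1)$ and the normalizer $\sqrt{1+4r\theta(\theta-1)}$ in \eqref{zix} come from projecting the four independent group score statistics onto the directions that remain free in the four-group design. Passing from this pointwise-in-$\theta$ statement to $R_n^*\stackrel{d}{\to}\sup_{0\le\theta\le1}Z_1^2(\theta)$ then needs a functional central limit theorem together with stochastic equicontinuity, on the compact set $[0,1]$, of $\theta\mapsto$ (restricted profile log-likelihood ratio); no singularity arises because $1+4r\theta(\theta-1)\ge 1-r>0$ on $[0,1]$. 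I expect this tightness/uniformity step, and the need to make the mixture expansion rigorous despite the two nonregularities flagged after Theorem~\ref{lrt-mean-var}, to be the main obstacle — but it is dispatched by the arguments already carried out there.

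It remains to deduce \eqref{rep2} from $R^*=\sup_{0\le\theta\le1}Z_1^2(\theta)$. Write $Z_1(\theta)=c_1(\theta)z_{11}+c_2(\theta)z_{12}$ with $c_1(\theta)=\sqrt{1-r}/\sqrt{1+4r\theta(\theta-1)}$ and $c_2(\theta)=\sqrt r(2\theta-1)/\sqrt{1+4r\theta(\theta-1)}$; then $c_1(\theta)^2+c_2(\theta)^2\equiv1$, so $(c_1(\theta),c_2(\theta))=(\cos\phi(\theta),\sin\phi(\theta))$ for an angle $\phi(\theta)$ with $\cos\phi(\theta)>0$ and $\tan\phi(\theta)=\sqrt{r/(1-r)}\,(2\theta-1)$. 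Since $\tan\gamma=\sqrt{r/(1-r)}$ for $\gamma=\arccos\sqrt{1-r}\in(0,\pi/2)$, the map $\theta\mapsto\phi(\theta)$ is a continuous increasing bijection of $[0,1]$ onto $[-\gamma,\gamma]$. Writing $(z_{11},z_{12})$ in polar coordinates as $\rho(\cos\eta^*,\sin\eta^*)$, so that $\rho^2\sim\chi_2^2$ and $\eta^*\sim\mathrm{Unif}[-\pi,\pi]$ are independent, one gets $Z_1(\theta)=\rho\cos(\eta^*-\phi(\theta))$, hence
\[
R^*=\rho^2\max_{\phi\in[-\gamma,\gamma]}\cos^2(\eta^*-\phi).
\]
The function $\phi\mapsto\cos^2(\eta^*-\phi)$ has period $\pi$ and equals $1$ precisely when $\phi\equiv\eta^*\pmod\pi$; when that value is unattainable for $\phi\in[-\gamma,\gamma]$, the only interior critical points of $\cos^2(\eta^*-\phi)$ on $[-\gamma,\gamma]$ are minima, so the maximum is attained at $\phi=\pm\gamma$ and equals $\max\{\cos^2(\eta^*-\gamma),\cos^2(\eta^*+\gamma)\}$. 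Since $\cos^2(\eta^*-\gamma)-\cos^2(\eta^*+\gamma)=\sin(2\eta^*)\sin(2\gamma)$, a short case analysis in $\eta^*$ shows that this maximum equals $1$ when $\eta^*\in A_1$, equals $\cos^2(\eta^*-\gamma)$ when $\eta^*\in A_2$, and equals $\cos^2(\eta^*+\gamma)$ when $\eta^*\in A_3$, with $A_1,A_2,A_3$ the partition in \eqref{set-A}: the breakpoints $\pm\pi/2$ are where the two endpoint values swap order, and the breakpoints $\pm\gamma,\pm(\pi-\gamma)$ are where the value $1$ becomes attainable. This yields \eqref{rep2}. The same computation with two independent planes $(z_{11},z_{12})$ and $(z_{21},z_{22})$ and objective $\rho_1^2\cos^2(\eta_1-\phi(\theta))+\rho_2^2\cos^2(\eta_2-\phi(\theta))$ — equivalently, maximizing the Rayleigh quotient of the associated $2\times2$ Wishart matrix over the directions whose angle lies in $[-\gamma,\gamma]$ — produces the representation \eqref{rep1}.
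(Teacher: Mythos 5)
Your proposal is correct and follows essentially the same route the paper intends: the paper omits this proof precisely because it is the restricted-parameter version of the Theorem~1 argument (Lemmas~1--2 apply verbatim to the estimator with $\bar\sigma_1=\bar\sigma_2$, and constraining $\hat\bbeta_2$ to the location direction collapses the $\chi^2_2$ process to $Z_1^2(\theta)$), while the representation \eqref{rep2} is obtained by the same polar transformation used for \eqref{rep1}. Your case analysis of $\max_{\phi\in[-\gamma,\gamma]}\cos^2(\eta^*-\phi)$ via the sign of $\sin(2\eta^*)\sin(2\gamma)$ correctly reproduces the partition $A_1,A_2,A_3$.
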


Compared with the results in \cite{Chen:2005}, Theorem \ref{lrt-mean}
makes two additional contributions.
First, the results are applicable to the more general location-scale distribution family,
whereas the results of \cite{Chen:2005} are restricted to the normal family.
Second,
from Lemma \ref{lem1} in the Appendix, $(\hat\mu_1^*,\hat\mu_2^*,\hat\sigma^*)$
is consistent without the assumption that the parameter space
for $(\mu_1,\mu_2)$ is bounded.
Hence, the asymptotic result for $R_n^*$ does not depend on this restrictive assumption.
The explicit representation in (\ref{rep2}) can be used as in (\ref{rep1})
to calculate the critical values of $R^*$.
\cite{Zhang:2008} also identified a representation for the $\chi^2_1$ process in Part (ii) of
Theorem \ref{lrt-mean}.
Our representation in (\ref{rep2}), obtained by polar transformations,
is a refined version of theirs.

\subsection{Local power analysis}

Our previous analysis guarantees that  in theory  the LRT with the proposed
critical value determining strategy can control its type I error asymptotically
when the null hypothesis is true.
We may wonder how it performs when  the null hypothesis is violated.
Asymptotic local power analysis  has become an important and
increasingly used tool for this purpose.

To investigate the asymptotic local power of $R_n$ and $R_n^*$, we consider
the following local alternative
\begin{equation}
H_{A}^{n}: \theta=\theta_0, ~
\begin{pmatrix}\mu_1\\ \mu_2\end{pmatrix}=
\begin{pmatrix}\mu_0-n^{-1/2}\delta_{\mu}\\
\mu_0+n^{-1/2}\delta_{\mu}\end{pmatrix}, ~
\begin{pmatrix}\sigma_1\\ \sigma_2\end{pmatrix}=
\begin{pmatrix}\sigma_0-n^{-1/2}\delta_{\sigma}\\
\sigma_0+n^{-1/2}\delta_{\sigma}\end{pmatrix},
\label{alternative}
\end{equation}
where $\delta_{\mu}$ and $\delta_{\sigma}$  are both positive constants.
Define  $\tau(\theta)=1+4r\theta(\theta-1)$,
\begin{equation*}
\label{def.TU}
T=\frac{\partial f(y_{11}; 0, 1)/\partial \mu} {f(y_{11},0,1)},
\quad \mbox{and}\quad
U=\frac{\partial
f(y_{11}; 0, 1)/\partial \sigma} {f(y_{11};0,1)}.
\end{equation*}
Let $\sigma_T^2=E(T^2)$  and let ${\bm A}$ be the covariance matrix of $(T,U)$,
where  the expectation and covariance are taken with respect to $f(y;0,1)$.
Further, let $\chi^2_m(c)$  denote  the non-central  chi-square distribution with non-centrality parameter
$c$ and $m$ degrees of freedom.

\begin{thm}
\label{localpower}
Assume the conditions of Theorem \ref{lrt-mean-var}. Under the local alternative $H_{A}^n$ in (\ref{alternative}), we have
\begin{itemize}
\item[(i)] $
    R_n\stackrel{d}\rightarrow
    \sup_{\theta\in[0,1]}\{\chi_2^2\big(\bm \rho_{\theta_0}^\tau(\theta)\bm\rho_{\theta_0}(\theta)\big)\},
    $
    where $$
    \bm\rho_{\theta_0}(\theta)=
    -\{1+2r(2\theta_0\theta-\theta_0-\theta)\}
    \{\tau(\theta)\}^{-\frac{1}{2}}\sigma_0^{-1}\bm A^{\frac{1}{2}}
    \begin{pmatrix}
    \delta_{\mu}\\
    \delta_{\sigma}
    \end{pmatrix};
    $$
\item[(ii)]
    $
    R_n^*\stackrel{d}\rightarrow\sup_{\theta\in[0,1]}\{\chi_1^2\big( \rho_{\theta_0}^{*2}(\theta)\big)\},
    $
    where
    $$
    \rho^*_{\theta_0}(\theta)
    =-\{1+2r(2\theta_0\theta-\theta_0-\theta)\}
    \{\tau(\theta)\}^{-\frac{1}{2}}\sigma^{-1}_0\sigma_T\delta_{\mu}.
    $$
\end{itemize}
\label{thm3}
\end{thm}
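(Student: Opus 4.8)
The plan is to obtain the limit laws in Theorem~\ref{localpower} from the null limits of Theorems~\ref{lrt-mean-var} and~\ref{lrt-mean} by a change-of-measure argument, exploiting that $H_A^n$ is a contiguous perturbation of the null. The proof of Theorem~\ref{lrt-mean-var} yields, en route, the uniform-in-$\theta$ expansion $R_n=\sup_{0\leq\theta\leq1}\{\bm W_n^\tau(\theta)\bm W_n(\theta)\}+o_p(1)$, in which $\bm W_n(\theta)$ is the suitably standardized profile score for the bivariate location--scale contrast at the putative mixing proportion $\theta$, and under $f(y;\mu_0,\sigma_0)$ the process $\bm W_n(\cdot)$ is tight and converges weakly to $(Z_1(\cdot),Z_2(\cdot))^\tau$; this is what gives $R_n\stackrel{d}\rightarrow R$. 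Similarly $R_n^*=\sup_{0\leq\theta\leq1}\{W_n^{*2}(\theta)\}+o_p(1)$ with $W_n^*(\cdot)$ the standardized profile score for the location contrast alone, which converges to $Z_1(\cdot)$. Since $g\mapsto\sup_{\theta}\|g(\theta)\|^2$ is a continuous functional on the space of continuous (vector-valued) functions on $[0,1]$ under the uniform norm, it suffices to identify the limit laws of $\bm W_n(\cdot)$ and $W_n^*(\cdot)$ under $H_A^n$.

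First I would check contiguity of $\{P_{H_A^n}\}$ with respect to $\{P_{H_0}\}$, where $P_{H_0}$ has all four groups i.i.d.\ $f(y;\mu_0,\sigma_0)$. With $(\nu_1,\nu_2,\nu_3,\nu_4)(\theta)=(-1,\,1-2\theta,\,2\theta-1,\,1)$, a first-order Taylor expansion shows that under $H_A^n$ the density of group $i$ equals $f(y;\mu_0,\sigma_0)+n^{-1/2}\nu_i(\theta_0)\{\delta_\mu\,\partial_\mu f+\delta_\sigma\,\partial_\sigma f\}+O(n^{-1})$ (derivatives at $(\mu_0,\sigma_0)$); hence the log-likelihood ratio $\Lambda_n=\log(dP_{H_A^n}/dP_{H_0})$ admits a local asymptotic normality expansion and is asymptotically normal under $P_{H_0}$, and Le Cam's first lemma gives contiguity. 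Contiguity preserves tightness, so $\bm W_n(\cdot)$ and $W_n^*(\cdot)$ remain tight and the $o_p(1)$ remainders above remain negligible under $H_A^n$.

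The key step is the drift computation. Under $P_{H_0}$, $\bm W_n(\theta)$ and $\Lambda_n$ are jointly asymptotically Gaussian because both are asymptotically linear in the per-observation location--scale scores; I would compute the cross-covariance $\lim_n\cov(\bm W_n(\theta),\Lambda_n)$ from the group-wise expansions together with the identities that $\int\bm s(y)\,\partial_\mu f(y;\mu_0,\sigma_0)\,dy$ and $\int\bm s(y)\,\partial_\sigma f(y;\mu_0,\sigma_0)\,dy$ are the two columns of $\sigma_0^{-2}\bm A$, where $\bm s$ is the standardized $(\mu,\sigma)$-score at $(\mu_0,\sigma_0)$, so that $\cov_{H_0}(\bm s)=\sigma_0^{-2}\bm A$. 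Aggregating the four groups with weights $p_i$, using $\tau(\theta)=\sum_i p_i\nu_i^2(\theta)$ and the algebraic identity $\sum_{i=1}^4 p_i\nu_i(\theta)\nu_i(\theta_0)=1+2r\{2\theta_0\theta-\theta_0-\theta\}$ (the same identity underlying the covariance function of $Z_h$), this collapses, up to the overall sign fixed by the chosen orientation of the contrast, to $\bm\rho_{\theta_0}(\theta)$. By the functional form of Le Cam's third lemma --- equivalently, by expanding the profile log-likelihood ratio directly under $H_A^n$ and applying a triangular-array law of large numbers to the mean of the standardized score and the central limit theorem to its fluctuation --- under $H_A^n$ the process $\bm W_n(\cdot)$ converges weakly to a Gaussian process with the same covariance as $(Z_1(\cdot),Z_2(\cdot))^\tau$ but mean function $\bm\rho_{\theta_0}(\cdot)$. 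The continuous mapping theorem then yields $R_n\stackrel{d}\rightarrow\sup_{0\leq\theta\leq1}\|(Z_1(\theta),Z_2(\theta))^\tau+\bm\rho_{\theta_0}(\theta)\|^2$; since $(Z_1(\theta),Z_2(\theta))$ is standard bivariate normal for each fixed $\theta$, the $\theta$-section of the limiting process is $\chi_2^2\big(\bm\rho_{\theta_0}^\tau(\theta)\bm\rho_{\theta_0}(\theta)\big)$, establishing part (i). Part (ii) follows by the same route with the bivariate score, $\bm A$ and $\bm\rho_{\theta_0}$ replaced by their location counterparts, the standardized drift then being $\rho^*_{\theta_0}(\theta)$ and the $\theta$-section $\chi_1^2\big(\rho_{\theta_0}^{*2}(\theta)\big)$.

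I expect the main obstacle to lie not in the change of measure, which is routine once contiguity is in hand, but in importing from the proof of Theorem~\ref{lrt-mean-var} the two ingredients used above in a sufficiently robust form: the uniform-in-$\theta$ quadratic expansion of the profile LRT with an $o_p(1)$ remainder, and the process-level weak convergence (hence tightness) of $\bm W_n(\cdot)$. These are precisely where the nonregularities of the model must be controlled --- the unboundedness of the likelihood in the scale parameters, and the non-identifiability of $\theta$ under the null, which makes the $\theta$-score vanish identically on $H_0$ and forces the supremum over $\theta$ to survive in the limit --- and they are already dealt with in the proof of Theorem~\ref{lrt-mean-var}; contiguity then transports them to $H_A^n$ with no additional effort. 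The only genuinely new work is the drift computation of the third step, which is mechanical once the $\sigma_0$-scalings from the location--scale structure are tracked and the combinatorial identity verified.
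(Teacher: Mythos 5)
Your proposal follows essentially the same route as the paper: it uses the null quadratic expansion $R_n=\sup_{\theta}R_n(\theta)+o_p(1)$ from the proof of Theorem~\ref{lrt-mean-var}, establishes contiguity of $H_A^n$ via the asymptotic normality of the log-likelihood ratio $\Lambda_n$ and Le Cam's first lemma, and obtains the drift $\bm\rho_{\theta_0}(\theta)$ from the null cross-covariance of the standardized score process with $\Lambda_n$ via Le Cam's third lemma, exactly as the paper does (including the identity $\sum_i p_i\nu_i(\theta)\nu_i(\theta_0)=1+2r(2\theta_0\theta-\theta_0-\theta)$). The only cosmetic difference is that you invoke a process-level (functional) form of the third lemma with continuous mapping, where the paper applies it pointwise in $\theta$ and transfers the $o_p(1)$ expansion to $H_A^n$ by Le Cam's first lemma; the substance is the same.
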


For convenience of presentation, the proof of Theorem \ref{thm3} is deferred to the Appendix.
Theorem \ref{localpower} indicates that
 the two LRTs $R_n$ and $R_n^*$ are both consistent
 under the local alternative $H_A^n$.
Note that  $\delta_\sigma$ appears in the limiting distribution of $R_n$
but not in that of $R_n^*$.
Hence, we expect that $R_n$ is more powerful than $R_n^*$ when $\sigma_1$ and $\sigma_2$ are significantly different, i.e., $\delta_\sigma$ is significantly different from 0.
This is confirmed in the simulation study.

\section{Simulation study}
\subsection{Setup}
In this section, we conduct  Monte Carlo simulations to provide insight
into the following questions:
\begin{itemize}
\item[(a)] Do  the limiting distributions of $R_n$ and $R_n^*$
provide  accurate approximations to their finite-sample distributions?
\item[(b)] How do $R_n$ and $R_n^*$ perform in terms of statistical power for detecting the existence of QTL effects?

\end{itemize}

In our simulation studies, we concentrate on the normal and  logistic kernels.
We consider four total sample sizes,  $n=$
50, 100, 200, and 300,
and three values   5, 10, and 20 for $d$, the
inter-marker distance of the single interval.
We use  the Haldane map function
$r = 0.5(1-e^{-2d/100})$ to determine the recombination frequency $r$.
The sample-size vector
$(n_1, n_2, n_3, n_4)$ is generated from a multinomial
distribution
$Multinom(n; (1-r)/2, r/2, r/2, (1-r)/2)$.
In the simulations, we set the significance level to $\alpha=5\%$ and $ 1\%$.
To save space,  the simulation results for $d=10$ are omitted.
Since detecting the existence of the QTL effects is essentially
testing the homogeneity of the distributions in four samples,
multiple-sample nonparametric tests  can be applied.
We choose the multiple-sample analog of the Kolmogorov--Smirnov
test (\citealp{Kiefer:1959}; denoted $KS$) and the multiple-sample
Anderson--Darling test  (\citealp{Scholz:1987};  denoted $AD$) as competitors.
 We study the finite-sample performance of the two LRTs, $R_n$ and  $R_n^*$,
by comparing them with the  two nonparametric tests.

\subsection{Comparison of Type I errors}
We first check the performance of the limiting distributions.
There are two methods to obtain the quantiles of $R_n$ and  $R_n^*$ from their limiting distributions: the first (denoted ``Ours") is based on the explicit representations
in  (\ref{rep1}) and (\ref{rep2}) and the second (denoted ``Davies") is the approximation method of \cite{Davies:1987} for the supremum of the $\chi^2$ process.
When we apply the first method, we generate $N=100,000$ realizations from the explicit representations in \eqref{rep1} and \eqref{rep2} respectively to determine the critical values of $R_n$ and $R_n^*$.
We summarize the empirical type I errors of $R_n$ and  $R_n^*$ from the above two methods and those of  $KS$ and $AD$ calibrated by their limiting distributions.

For the simulation results, the data are generated from $N(0,1)$ and $Logistic(0,1)$, i.e., $f_1=f_2=N(0,1)$ and $f_1=f_2=Logistic(0,1)$, respectively.
Here $N(\mu,\sigma^2)$ denotes normal distribution with mean $\mu$ and variance $\sigma^2$
and $Logistic(\mu,\sigma)$ denotes logistic distribution with location and scale parameters being $\mu$ and $\sigma$, respectively.
All the empirical type I errors in Table \ref{typeI1} are calculated based on 10,000 repetitions.
We can see that the empirical type I errors of the two LRTs based on the explicit representations
in  (\ref{rep1}) and (\ref{rep2})   are quite close to the nominal levels, and they are closer than those based on Davies's method.
The empirical type I errors of
$R_n$ and $R_n^*$ based on the explicit representations
indicate that  the limiting distributions of $R_n$ and $R_n^*$
provide  accurate approximations to their finite-sample distributions.
The simulated type I errors of $AD$ are also quite close to the nominal level,
while $KS$ seems to be quite conservative.
As the sample size increases, all the empirical
type I errors become closer to the nominal levels.

\begin{center}
Table \ref{typeI1} should be inserted here.   
\end{center}

\subsection{Comparison of powers}

In this subsection, we compare the powers of our LRT methods with those of $KS$ and $AD$.
We consider two values for $\theta$, 0.5 and 0.7,
and six combinations of $f_1$ and $f_2$:
\begin{enumerate}
\item[] Case I: $f_1=N(0,1)$ and $f_2=N(0.5,1)$;
\item[] Case II: $f_1=N(0,1)$ and $f_2=N(0.5,1.25^2)$;
\item[] Case III: $f_1=N(0.5,0.75)$ and $f_2=N(0.5,1.25)$;
\item[] Case IV: $f_1=Logistic(0,1)$ and $f_2=Logistic(1,1)$;
\item[] Case V: $f_1=Logistic(0,1)$ and $f_2=Logistic(0.8,1.35)$;
\item[] Case VI: $f_1=Logistic(0.5,1)$ and $f_2=Logistic(0.5,1.5)$.
\end{enumerate}
The QTL affects only the location in Cases I and IV;  both the location and scale in Cases II and V;
and only the scale in Cases III and VI.
We consider the same four sample sizes $n$ and three values for
$d$ as  before.
To save space,
we present the results only for $n=100$ and 200;
the trends are similar for the other sample sizes.
The simulated powers of $R_n$, $R_n^*$, $KS$, and $AD$ for Cases I--III are presented in Table \ref{power1},
and those for Cases IV--VI are presented in Table \ref{power2}.
For a fair comparison, the critical values of
$R_n$, $R_n^*$, $KS$, and $AD$ are obtained by 10,000
repetitions under the null model.
All the power calculations are based on 1,000 repetitions.

\begin{center}
Tables \ref{power1}--\ref{power2} should be inserted here. 
\end{center}

Under the normal kernel, the performance trends of the simulation results displayed in Table \ref{power1} are as follows. When $f_1$ and $f_2$
have different means but the same variance, $R_n^*$ is the most  powerful of the four tests,
 and $R_n$ is always more powerful than $KS$ and $AD$.
In  contrast, when both the mean and variance of $f_1$ and $f_2$ are different,
$R_n$ is more powerful than the other three methods, and $R_n^*$ is better than $KS$ and $AD$. That is, if the QTL affects the mean but
not the variance, $R_n^*$ is more powerful at detecting
it than the other three methods, while if the QTL affects both the mean
and variance, $R_n$  is more powerful. All the powers increase as the sample size increases.
When $f_1$ and $f_2$ differ only in the variance, the powers of $R_n$ are far greater than those of the other three methods, whereas $R_n^*$ has almost no power. The power of $R_n$  increases  as the sample size increases,
while that of $R_n^*$ remains almost unchanged and is
close to the nominal type I error.
This implies that if the QTL affects
the variance but not the mean,
$R_n$ is more powerful at detecting it and $R_n^*$ seemingly fails.

Under the logistic kernel, the power performance trend is similar to that of the normal kernel. Hence, we omit the analysis.

Comparing the powers of $R_n$ in Tables \ref{power1} and \ref{power2},
we notice that  the power of $R_n$ under the logistic kernel
is  larger than that under the normal kernel in most cases
when other settings such as $d$ and $\theta$ are the same.
To explain this phenomenon,
we provide the Kullback-Leibler (KL) information with respect to the null model for all alternative models
in the last column of   Tables \ref{power1} and \ref{power2}.
It is seen that the KL information under Cases IV is larger than that under Case I.
This explains why the power of $R_n$  under Case IV  is larger.
The same interpretation is applicable to the comparison between Case VI and Case III.
We also note that the KL information under Cases V is close to that under Case II,
which explains why the powers of $R_n$ under these two cases are comparable.

\section{Real application}

In this section, we illustrate our test by analyzing
the data on the double haploid (DH) population of rice
in Example 11.3 of  \cite{Wu:2007}. The dataset
is available at  \url{http://www.buffalo.edu/~cxma/book/}.
We first give a brief background.
Two inbred lines, semi-dwarf IR64 and tall Azucena, were
crossed to generate an heterozygous $F_1$ progeny population.
Doubling the haploid chromosomes for the gametes of the
$F_1$ population led to 123 DH plants (\citealp{Huang:1997}).
 Such a DH population is equivalent to a backcross population because its marker segregation
follows 1:1 (\citealp{Huang:1997}).
These 123 DH plants were then genotyped for
135 RFLP and 40 isozyme and RAPD markers,
which represent a good coverage of 12 rice chromosomes (\citealp{Huang:1997}).

We use chromosome 1 for illustration. The
cumulative and pairwise map distances in centiMorgans for
18 markers on chromosome 1 are given in Table 11.4 of
\cite{Wu:2007}; this results in 17 intervals.
Table 11.4 of \cite{Wu:2007} also gives the sample size, sample mean, and sample variance
for the observations in each interval.
In  the analysis of \cite{Wu:2007},
$f_1$ and $f_2$ are assumed to be normal distributions.
To check the reasonability of this assumption,
we apply the Kolmogorov--Smirnov test for the normality of
the first sample and the fourth sample for each of the 17 intervals.
The results are summarized  in Table \ref{ks.pvalue}.
As we can see, all the $p$-values are greater than 0.3 in all 17 intervals.
This confirms that it is reasonable to assume that both $f_1$ and $f_2$ are normal distributions.

We next calculate $R_n$ and $R_n^*$ for each interval under the normality assumption on $f_1$ and $f_2$.
From (\ref{rep1})/\eqref{rep2},
we generate $N=100,000$ realizations and use them to  calculate the $p$-values of   $R_n$ and $R_n^*$.
Table \ref{real-data} shows their $p$-values  for the 17 intervals.
For comparison purposes, we also include the results from
$KS$, $AD$,  $R_n$, and $R_n^*$ under the logistic kernel for $f_1$ and $f_2$ for each interval.

We can see that both $R_n$ and $R_n^*$ under the normal kernel
suggest overwhelming evidence for the existence of a QTL in the last five intervals.
We also observe that the results  $R_n$ and $R_n^*$ under the normal kernel
are consistent with those from $KS$ and $AD$, and also those from $R_n$ and $R_n^*$ under the logistic kernel.
These findings may be confirmed by further experiments.
It is worth mentioning that in the interval RG173--RZ276, the  $p$-value (0.088)
of $R_n$ is much smaller than that (0.684) of $R_n^*$. At the 10\% significance level,
$R_n$ declares the existence of a QTL in this interval,
while $R_n^*$ fails. Since $R_n^*$ is designed to detect a QTL effect in only the mean,
while $R_n$ is able to detect a QTL effect in either the mean, the variance or both,
we believe that there exists a QTL effect in  only the variance  in this interval.

\begin{center}
Tables \ref{ks.pvalue}--\ref{real-data} should be inserted here.
\end{center}

\section{Discussion}

In practice, a QTL effect in the variance  may be  more crucial
than a QTL effect in the mean (\citealp{Korol:1996}).
In this paper, under the location-scale distribution family,
we studied the asymptotic properties of the MLEs of the unknown parameters
and the LRT for the existence of a QTL effect
under two general setups:
1) $f_1$ and $f_2$ may have different locations and/or scales,
and 2) $f_1$ and $f_2$ have the same unknown scale.
Our  theoretical results do not rely on the assumption
that the parameter space for the locations is bounded.
Explicit representations for the limiting distributions are  presented,
which facilitates the determination of the critical values.
These results enrich the literature and strengthen existing results on
QTL mapping in genetic backcross studies.
Our simulation studies show that
the explicit representations of the limiting distributions
result in LRTs with more accurate type I error rates than those based on
the approximation method of \cite{Davies:1987}.
Further, the LRT in both location and scale is uniformly more powerful than
the nonparametric tests for the homogeneity of distributions in multiple samples.

The results  in this paper are obtained under the assumption  that there is no double recombination.
When double recombination does occur,
 the data in each of the four groups are from a mixture distribution in both location and scale.
In this situation, the log-likelihood function  is unbounded, and hence the MLEs of the unknown parameters are not well defined.
We suggest  adding a penalty  on the scales, leading to a bounded penalized likelihood (\citealp{Chen:2008}).
The LRT based on the penalized likelihood can be constructed accordingly.
We expect that this new LRT will  have  similar properties to those in Theorem \ref{lrt-mean-var};
we leave this for future research.

\section*{Acknowledgments}
The authors thank the editor, associate editor, and two referees for constructive comments
and suggestions that led to significant improvements in the paper.
Dr. Li's research is supported in part by NSERC Grant RGPIN-2015-06592. Dr. Liu and Dr. Pu's research is supported by grants from the National Natural Science Foundation of China (11801359, 11771144, 11771145, 11971300),  the 111 Project (B14019) and Shanghai National Natural Science Foundation (19ZR1420900).

\section*{Supporting information}
Additional information for this article is available online as a supplementary material, which contains
the proof of Lemma \ref{lem1}, the limiting distribution of the LRT under the whole genome,  the illustration of some examples that satisfy Conditions A1--A7, and some additional simulation results.

\section*{Corresponding author's address}
Yukun Liu, Key Laboratory of Advanced
Theory and Application in Statistics and Data
Science - MOE, School of Statistics, East China
Normal University, 3663 Zhongshanbei Street, Shanghai 200062, China.
Email: ykliu@sfs.ecnu.edu.cn.

\section*{Appendix}
\subsection*{Regularity conditions}

The asymptotic properties of the LRTs rely on regularity conditions
on $f(y;\mu,\sigma)$. We impose the following  regularity
conditions on $f(y;\mu, \sigma)$ in which the expectations are taken under the null distribution $f(y; \mu_0, \sigma_0)$.

\begin{description}
\item[A1. (Integrability)] 
$\int_{\mathbb{R}} |\log f(y;0,1)|  f(y;0,1) dy<\infty$.
\item[A2. (Smoothness)]
The support of $f(y;0,1)$ is $(-\infty,\infty)$,
 and $f(y;0,1)$ is three times continuously differentiable with respect to $y$.

\item[A3. (Identifiability)]
For any two mixing distribution
functions $\psi_1$ and $\psi_2$ with two supporting
points such that
$$
\int f(y;\mu,\sigma)\,d\psi_1(\mu,\sigma)=\int f(y;\mu,\sigma)\,d\psi_2(\mu,\sigma)
$$
for all $y$, we must have $\psi_1=\psi_2$.

\item[A4. (Uniform boundedness)]
There exists a function $g$ such that
$$
{\Big|}\dfrac{\partial^{(h+l)} f(y;0,1)/\partial
\mu^h\partial \sigma^l}{f(y;0,1)}{\Big|}^3\leq g(y), \quad
\text{for }h+l\leq 2,
$$
for all $y$,where $h$ and $l$ are two nonnegative integers,
and
$$
\int_{\mathbb{R}} g(y)  f(y;0,1) dy<\infty.
$$
Moreover, there exists a positive $\epsilon$ such that
$$
\sup_{\mu^2+|\sigma-1|^2\leq \epsilon}
{\Big|}\dfrac{\partial^{(h+l)} f(y;\mu,\sigma)/\partial
\mu^h\partial \sigma^l}{f(y;0,1)}{\Big|}^3\leq g(y),
 \quad
\text{for }h+l=3.
$$

\item[A5. (Positive definiteness)]
The covariance matrix of
$(T,U)$ is positive definite, where
$$
T=\frac{\partial f(y_{11}; 0, 1)/\partial \mu} {
f(y_{11},0,1)}~~\mbox{and}~~ U=\frac{\partial
f(y_{11}; 0, 1)/\partial \sigma} { f(y_{11};0,1)}.
$$
That is, $T$ and $U$ are linearly uncorrelated. 
\item[A6. (Tail condition)]
There exist positive constants
$v_0$, $v_1$, and $\beta$  with $\beta>1$ such that
$f(y;0,1)\leq \min \{v_0, v_1|y|^{-\beta}\}$.

\item[A7. (Upper bound function)]
There exist  a nonnegative function $s(y;\mu,\sigma)$
 and three positive numbers  $(a_0, b_0, \epsilon^*)$ with  $\epsilon^*<1$,
 such that
(1)   $1/\beta<a_0<1$ with $\beta$ in Condition A6,
(2)  $s(y;0,1)$ is continuous in $y$ and satisfies  $\int_{\mathbb{R}} |\log s(y;0,1)|  s(y;0,1) dy<\infty$ and $\lim_{y\to \infty}s(y;0,1)=0$,
and
(3) for $\sigma\in(0,\epsilon^*\sigma_0)$, the function of $y$
$s(y;0,\sigma)$ is uniformly bounded, $\int s(y;0,\sigma)dy<1$,
and
$$
f(y;0,\sigma)\leq
\left\{
\begin{array}{cc}
\sigma^{-1}s(y;0,\sigma),&\mbox{ if } |y|\leq \sigma^{1-a_0}\\
\sigma^{b_0}s(y;0,\sigma),&\mbox{ if } |y|>\sigma^{1-a_0}\\
\end{array}
\right..
$$
\end{description}

\subsection*{Two technical lemmas}

Lemma \ref{lem1} establishes the consistency
of the MLEs under the null model; this is the first step in the proof of Theorems \ref{lrt-mean-var}--\ref{lrt-mean}.
The lemma claims that any estimator of $(\theta,\mu_1,\mu_2,\sigma_1,\sigma_2)$
with a large likelihood value is consistent for $\mu_h$ and $\sigma_h$, $h=1,2$,
under the null model.
Since both $R_n$ and $R_n^*$ are invariant to location and scale transformations, we assume that $(\mu_0,\sigma_0)=(0,1)$.

\renewcommand{\thelem}{1}
\begin{lem}
\label{lem1}
Assume the conditions of Theorem \ref{lrt-mean-var}.
Let $(\bar\theta, \bar\mu_1, \bar\mu_2,\bar\sigma_1,\bar\sigma_2)$ be any estimator of
$(\theta, \mu_1, \mu_2,\sigma_1,\sigma_2)$ such that
\begin{eqnarray}
\label{key-cond-lemma1}
l_n(\bar\theta, \bar\mu_1, \bar\mu_2,\bar\sigma_1,\bar\sigma_2)-l_n(0.5,0,0,1,1)
\geq c
>-\infty
\end{eqnarray}
for some constant $c$ for all $n$.
Then under the null model $f(y;0,1)$, $\bar\mu_1=o_p(1)$, $\bar\mu_2=o_p(1)$, $\bar\sigma_1-1=o_p(1)$, and $\bar\sigma_2-1=o_p(1)$.
\end{lem}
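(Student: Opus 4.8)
\textit{Proof plan.}\quad
Since the event \eqref{key-cond-lemma1} and its conclusion are invariant under a common affine transformation of all the $y_{ij}$, assume $(\mu_0,\sigma_0)=(0,1)$ and put $f_0=f(\cdot;0,1)$. Write $l_n=L_{n,1}+L_{n,2}+L_{n,3}+L_{n,4}$ for the decomposition into the four marker groups, and let $L_{n,i}^{0}$ denote $L_{n,i}$ at $(0.5,0,0,1,1)$. The feature of \eqref{model} that makes the argument possible is that $L_{n,1}$ and $L_{n,4}$ are \emph{ordinary} i.i.d.\ location--scale log-likelihoods in $(\mu_1,\sigma_1)$ and $(\mu_2,\sigma_2)$, respectively, whereas $L_{n,2}$ and $L_{n,3}$ are two-component mixture log-likelihoods ($\theta$ occurs only in these two terms). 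The plan is to prove that, for every $\epsilon>0$,
$$
\sup\bigl\{\,l_n(\theta,\mu_1,\mu_2,\sigma_1,\sigma_2)-l_n(0.5,0,0,1,1):\ \theta\in[0,1],\ (\mu_1,\mu_2,\sigma_1,\sigma_2)\notin\mathcal N_\epsilon\,\bigr\}\stackrel{p}{\longrightarrow}-\infty,
$$
where $\mathcal N_\epsilon=\{(\mu_1,\mu_2,\sigma_1,\sigma_2):|\mu_h|\le\epsilon,\ |\sigma_h-1|\le\epsilon,\ h=1,2\}$. Since $c$ in \eqref{key-cond-lemma1} is a fixed constant, this forces $P\{(\bar\mu_1,\bar\mu_2,\bar\sigma_1,\bar\sigma_2)\in\mathcal N_\epsilon\}\to1$, and letting $\epsilon\downarrow0$ gives the claim.

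\textbf{Compact region.}\quad
Fix a large $M$ and a small $\delta_0>0$ and set $\mathcal K=[0,1]\times\{(\mu_h,\sigma_h):|\mu_h|\le M,\ \delta_0\le\sigma_h\le M,\ h=1,2\}$. On $\mathcal K$ the bound $\sigma\ge\delta_0$ and Condition A6 give $\log^{+}\{f(y;\mu,\sigma)/f_0(y)\}\le\log(v_0/\delta_0)+|\log f_0(y)|$, which is $f_0$-integrable by Condition A1; together with A2 and A4, a standard uniform law of large numbers (in the form giving an upper bound on the supremum; see \citealp{van der Vaart:2000}) yields
$$
\limsup_{n\to\infty}\ \sup_{\mathcal K\setminus\mathcal N_\epsilon}\ \frac1n\bigl\{l_n-l_n(0.5,0,0,1,1)\bigr\}\ \le\ -\inf_{\mathcal K\setminus\mathcal N_\epsilon}\ \sum_{i=1}^{4}p_i\,\mathrm{KL}(f_0\Vert g_i)\qquad\mbox{a.s.},
$$
with $g_1=f(\cdot;\mu_1,\sigma_1)$, $g_4=f(\cdot;\mu_2,\sigma_2)$, $g_2=\theta g_1+(1-\theta)g_4$, $g_3=(1-\theta)g_1+\theta g_4$. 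Each divergence is nonnegative, and by the identifiability Condition A3 (with degenerate mixing distributions) the sum vanishes only when $g_1=g_4=f_0$, i.e.\ $(\mu_h,\sigma_h)=(0,1)$ for $h=1,2$; hence the infimum above is strictly positive and $\sup_{\mathcal K\setminus\mathcal N_\epsilon}\{l_n-l_n(0.5,0,0,1,1)\}\stackrel{p}{\longrightarrow}-\infty$.

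\textbf{Edges.}\quad
It remains to exclude the estimator leaving $\mathcal K$. Consider a vanishing component scale, say $\sigma_1\le\delta_0$ (the cases $\sigma_1\ge M$ and $|\mu_1|\ge M$ are easier, using only that $f(\cdot;\mu_1,\sigma_1)$ is then uniformly $\le v_0/\sigma_1$ or uniformly small on compacts by A6; and $h=2$ is symmetric with $L_{n,4}$ replacing $L_{n,1}$). By A6 with $\beta>1$, $f(y;\mu_1,\sigma_1)\le v_1c_0^{-\beta}\sigma_1^{\beta-1}$ for $|y-\mu_1|\ge c_0$; combining this with the Glivenko--Cantelli bound $\sup_{\mu}\#\{j\le n_1:|y_{1j}-\mu|\le c_0\}\le\kappa n_1+o_p(n_1)$, where $\kappa\downarrow0$ as $c_0\downarrow0$, and choosing $c_0$ small enough, one obtains, uniformly in $\mu_1$,
$$
L_{n,1}-L_{n,1}^{0}\ \le\ -\,\frac{\beta-1}{4}\,n_1\,|\log\sigma_1|+O_p(n).
$$
For the mixture terms, $\log g_2\le\max\{\log f(\cdot;\mu_1,\sigma_1),\log f(\cdot;\mu_2,\sigma_2)\}$ (and likewise for $g_3$); split groups $2$ and $3$ into the spike window $\{|y-\mu_1|\le\sigma_1^{1-a_0}\}$ (and, when $\sigma_2\le\delta_0$, its analogue about $\mu_2$) and its complement. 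By the maximal inequality $\sup_{x}\#\{j\le m:|y_{ij}-x|\le t\}=O_p(mt+\log m)$, uniform in the window width $t$, the spike window holds $O_p(n\sigma_1^{1-a_0}+\log n)$ observations, each contributing at most $|\log\sigma_1|+O(1)$, so its total is at most $O_p\bigl(n\delta_0^{1-a_0}|\log\delta_0|+(\log n)^2\bigr)$; off the window, A6 together with the requirement $a_0>1/\beta$ of Condition A7 forces $f(y;\mu_1,\sigma_1)=O(\sigma_1^{\beta a_0-1})\to0$ uniformly, and the remaining part of $L_{n,2}+L_{n,3}$ is then governed by an ordinary uniform law and is at most $L_{n,2}^{0}+L_{n,3}^{0}+O_p(n)$ (the envelope $s$ of Condition A7 enters here when $\sigma_2$ is also small). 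Finally $L_{n,4}-L_{n,4}^{0}\le o_p(n)$ by consistency of the ordinary location--scale MLE, a classical consequence of the regularity conditions. Adding the four contributions, the $L_{n,1}$ penalty dominates, so $l_n-l_n(0.5,0,0,1,1)\le-c_{\delta_0}\,n$ with probability tending to one, where $c_{\delta_0}>0$ can be made arbitrarily large by shrinking $\delta_0$; this is eventually below $c$.

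\textbf{Main obstacle.}\quad
The delicate point is the competition inside the ``Edges'' step. The mixture log-likelihoods $L_{n,2},L_{n,3}$ are genuinely unbounded above---a single scale component collapsing to $0$ drives them to $+\infty$---so no uniform law of large numbers can be invoked for them directly; the whole argument rests on the quantitative fact that such a configuration increases $L_{n,2}+L_{n,3}$ by only $o_p(n)$ (a logarithmic gain over $O_p(\log n)$ observations) while decreasing the ordinary group $L_{n,1}$ (or $L_{n,4}$) by order $n|\log\sigma_1|$. Conditions A6 (a polynomial tail with exponent $\beta>1$) and A7 (a scale-sensitive envelope near $\sigma=0$ with $a_0>1/\beta$) are precisely what make this balance work, and the supporting empirical-process estimate---the maximal inequality controlling, uniformly in the window width, how many observations one spike can reach---is the main technical ingredient. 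Once Lemma~\ref{lem1} holds, the proofs of Theorems \ref{lrt-mean-var}--\ref{lrt-mean} proceed by localization and a Taylor expansion of $l_n$ around $(0,1)$.
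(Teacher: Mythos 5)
Your plan follows essentially the same route as the paper's proof (given in its supplementary material): a Wald-type consistency argument in which the pure first and fourth samples penalize any departure of $(\mu_h,\sigma_h)$ from $(0,1)$, while Conditions A6--A7 are used exactly as you describe to show that a collapsing scale can inflate the mixture log-likelihoods $L_{n,2}+L_{n,3}$ by only a lower-order amount, so the ordinary-sample penalty of order $n|\log\sigma_1|$ wins. The outline and the key quantitative balance are correct; the one point to repair when writing it out is your claim that the spike-window contribution is $O_p\bigl(n\delta_0^{1-a_0}|\log\delta_0|+(\log n)^2\bigr)$ uniformly in $\sigma_1\le\delta_0$ --- the $O_p(\log n)$ observations in the window each contribute up to $|\log\sigma_1|+O_p(1)$, which is unbounded as $\sigma_1\downarrow 0$, so the bound should be kept in the form $O_p\bigl(n\delta_0^{1-a_0}+\log n\bigr)\bigl(|\log\sigma_1|+O_p(1)\bigr)$ and then absorbed into the dominant $-c\,n|\log\sigma_1|$ term; the conclusion is unaffected.
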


The proof of Lemma \ref{lem1} is quite long and technically involved. For convenience
of presentation, we leave it to the supplementary material.

\renewcommand{\therem}{A.\arabic{rem}}
\setcounter{rem}{0}
\begin{rem}
By the definition of MLE,
\[
l_n(\hat \theta, \hat \mu_1, \hat \mu_2, \hat \sigma_1, \hat \sigma_2)\geq l_n(0.5,0,0,1,1).
\]
Hence the MLE $(\hat \theta, \hat \mu_1, \hat \mu_2,\hat  \sigma_1,\hat  \sigma_2)$
of $(\theta, \mu_1, \mu_2,\sigma_1,\sigma_2)$ automatically satisfies
Condition \eqref{key-cond-lemma1} with  $c=0$.
Therefore the estimators in Theorems \ref{lrt-mean-var}--\ref{localpower}
all satisfy this condition automatically.
\end{rem}

In the next
lemma, we strengthen the conclusion of Lemma \ref{lem1} by providing an order assessment.
For convenience of presentation,
we define some notation.
Let $$
T_{ij}=\frac{\partial f(y_{ij};0,1)/\partial \mu}{f(y_{ij};0,1)},~~~
U_{ij}=\frac{{\partial f(y_{ij};0,1)/\partial \sigma}}{f(y_{ij};0,1)},~~i=1,\ldots,4,~~j=1,\ldots,n_i.
$$
Further,   let
$$
\bm{a}_{ij}=
\begin{pmatrix}
T_{ij}\\ U_{ij}
\end{pmatrix}, ~~
\bm{a}_i
=\sum_{j=1}^{n_i}\bm{a}_{ij},~~
\bm{a}
=\sum_{i=1}^{4}\bm{a}_{i},  ~~
\mbox{ and }~~
\bm {A}=\begin{pmatrix}
\sigma^2_T  &\sigma_{TU}\\
\sigma_{TU} &\sigma^2_U
\end{pmatrix},
$$
where  $\sigma^2_T=Var (T_{i1}), \sigma^2_U=Var(U_{i1})$, and
$\sigma_{TU}=Cov(T_{i1},   U_{i1})$.
We define
\begin{equation}
\label{barm}
\bar m_1(\theta)=\theta\bar\mu_1+(1-\theta)\bar\mu_2,~~
\bar m_2(\theta)=\theta(\bar\sigma_1-1)+(1-\theta)(\bar\sigma_2-1),
\end{equation}
and $\bar{\bm{m}}(\theta)=\Big( \bar m_1(\theta), \bar m_2(\theta)\Big)^\tau. $

\renewcommand{\thelem}{2}
\begin{lem}
\label{lem2}
Assume the conditions of Lemma \ref{lem1}.
Then under the null model $f(y;0,1)$, $\bar\mu_1=O_p(n^{-1/2})$,
$\bar\mu_2=O_p(n^{-1/2})$, $\bar\sigma_1-1=O_p(n^{-1/2})$, and $\bar\sigma_2-1=O_p(n^{-1/2})$.
\end{lem}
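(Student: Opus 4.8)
The plan is to upgrade the consistency of Lemma~\ref{lem1} to a root-$n$ rate by the standard device that ``the log-likelihood ratio is bounded below, hence its quadratic term cannot be too negative'', exploiting the four-sample structure to control all four parameters simultaneously. Write $\bm\delta_1=(\bar\mu_1,\bar\sigma_1-1)^\tau$ and $\bm\delta_2=(\bar\mu_2,\bar\sigma_2-1)^\tau$, so that $\bar{\bm m}(\theta)=\theta\bm\delta_1+(1-\theta)\bm\delta_2$ in the notation of \eqref{barm}, and set $\bm\delta=(\bm\delta_1^\tau,\bm\delta_2^\tau)^\tau$. Lemma~\ref{lem1} gives $\bm\delta=o_p(1)$, so eventually $\bm\delta$ lies inside the $\epsilon$-neighbourhood of Condition A4 and the small-$\sigma$ nonregularity no longer intervenes. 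Since $\theta$ is absent from the null model, $l_n(0.5,0,0,1,1)=l_n(\bar\theta,0,0,1,1)$, so condition \eqref{key-cond-lemma1} reads
\[
\ell_n(\bm\delta,\bar\theta):=l_n(\bar\theta,\bar\mu_1,\bar\mu_2,\bar\sigma_1,\bar\sigma_2)-l_n(\bar\theta,0,0,1,1)\ge c,
\]
and it suffices to establish a matching \emph{upper} bound for $\ell_n(\bm\delta,\bar\theta)$ that behaves like ``$O_p(\sqrt n)\,\|\bm\delta\|$ minus a positive multiple of $n\|\bm\delta\|^2$''.

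To get that upper bound I would handle the four groups separately. For the two \emph{pure} groups $i=1$ and $i=4$ (which carry only $\bm\delta_1$ and only $\bm\delta_2$, respectively), a third-order Taylor expansion of $\sum_j\log\{f(y_{ij};\bm\delta_h)/f(y_{ij};0,1)\}$ gives $\bm a_i^\tau\bm\delta_h-\tfrac12\,n_i\,\bm\delta_h^\tau\bm A\,\bm\delta_h\,(1+o_p(1))+R_i$, where under the null the $\bm a_{ij}$ are i.i.d.\ mean-zero with covariance $\bm A$ so $\bm a_i=O_p(\sqrt n)$, the observed-information term converges to $\bm A$ by the information identity, and Condition A4 dominates the cubic remainder so that $|R_i|\le C\|\bm\delta_h\|^3\sum_j g(y_{ij})=o_p(n\|\bm\delta\|^2)$ uniformly over the shrinking neighbourhood. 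For the two \emph{mixture} groups $i=2$ and $i=3$ I would not differentiate the logarithm of a mixture; instead, with $\phi_h(y)=f(y;\bm\delta_h)/f(y;0,1)$, concavity of $\log$ gives
\[
\sum_{j=1}^{n_i}\log\{w\phi_1(y_{ij})+(1-w)\phi_2(y_{ij})\}\le w\sum_{j=1}^{n_i}\log\phi_1(y_{ij})+(1-w)\sum_{j=1}^{n_i}\log\phi_2(y_{ij}),
\]
with $(w,1-w)=(\bar\theta,1-\bar\theta)$ for $i=2$ and $(1-\bar\theta,\bar\theta)$ for $i=3$; each term on the right is an ordinary log-likelihood ratio of the kind just treated, so the contribution of group $2$ is at most $\bm a_2^\tau\bar{\bm m}(\bar\theta)+o_p(n\|\bm\delta\|^2)$ and that of group $3$ at most $\bm a_3^\tau\{(1-\bar\theta)\bm\delta_1+\bar\theta\bm\delta_2\}+o_p(n\|\bm\delta\|^2)$, their (non-positive) quadratic parts being simply discarded.

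Summing the four pieces, retaining only the negative definite quadratic terms of groups $1$ and $4$, and using $\|\bar{\bm m}(\bar\theta)\|\le\|\bm\delta\|$ together with $\bm a_i=O_p(\sqrt n)$, I obtain
\[
c\le\ell_n(\bm\delta,\bar\theta)\le O_p(\sqrt n)\,\|\bm\delta\|-\tfrac12\big(n_1\,\bm\delta_1^\tau\bm A\bm\delta_1+n_4\,\bm\delta_2^\tau\bm A\bm\delta_2\big)(1+o_p(1))+o_p(n\|\bm\delta\|^2).
\]
Since $n_1/n\to p_1>0$, $n_4/n\to p_4>0$, and $\bm A$ is positive definite by Condition A5, the middle term is bounded above by $-\kappa\,n\|\bm\delta\|^2$ for a fixed $\kappa>0$ on an event whose probability tends to $1$; absorbing the $o_p(n\|\bm\delta\|^2)$ remainder leaves $\kappa\,n\|\bm\delta\|^2\le O_p(\sqrt n)\,\|\bm\delta\|+|c|$. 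Treating this as a quadratic inequality in $\|\bm\delta\|$ yields $\|\bm\delta\|=O_p(n^{-1/2})$, which is precisely $\bar\mu_1=O_p(n^{-1/2})$, $\bar\mu_2=O_p(n^{-1/2})$, $\bar\sigma_1-1=O_p(n^{-1/2})$, and $\bar\sigma_2-1=O_p(n^{-1/2})$.

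The hard part is the uniform control of the Taylor remainders: one must verify that every cubic remainder is $o_p(n\|\bm\delta\|^2)$ \emph{uniformly} over the random, shrinking neighbourhood to which $\bm\delta$ is confined by Lemma~\ref{lem1}, and this is exactly what the integrable dominating function $g$ of Condition A4 delivers through the law of large numbers applied to $\sum_j g(y_{ij})$. The only other delicate point is the treatment of the two mixture groups, for which the concavity bound above is the clean substitute for a direct expansion; and, exactly as stressed after Theorem~\ref{lrt-mean-var}, no step of this argument requires the parameter space for $(\mu_1,\mu_2)$ to be bounded.
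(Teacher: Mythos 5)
Your overall strategy is the same as the paper's: decompose $l_n$ into the four group contributions, extract the negative-definite quadratic terms from the two pure groups $i=1,4$, reduce the two mixture groups to essentially linear terms of size $O_p(\sqrt n)\,\|\bm\delta\|$, and then solve the resulting quadratic inequality in $\|\bm\delta\|$. However, the key step for the mixture groups is wrong: concavity of $\log$ gives
\[
\log\{w\phi_1(y)+(1-w)\phi_2(y)\}\;\ge\; w\log\phi_1(y)+(1-w)\log\phi_2(y),
\]
i.e.\ a \emph{lower} bound on the mixture log-likelihood (this is just the weighted AM--GM inequality), whereas you assert the reverse inequality and use it as an \emph{upper} bound. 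Since the entire argument hinges on sandwiching $\ell_n(\bm\delta,\bar\theta)$ from above by ``linear minus positive quadratic,'' this step invalidates the bound for groups $2$ and $3$, and the chain $c\le\ell_n\le\cdots$ breaks down there.

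The repair is the device the paper actually uses: write the group-$2$ contribution as $\sum_{j}\log(1+\delta_j)$ with
$\delta_j=\bigl[\bar\theta\{f(y_{2j};\bar\mu_1,\bar\sigma_1)-f(y_{2j};0,1)\}+(1-\bar\theta)\{f(y_{2j};\bar\mu_2,\bar\sigma_2)-f(y_{2j};0,1)\}\bigr]/f(y_{2j};0,1)$,
apply the elementary inequality $\log(1+x)\le x-x^2/2+x^3/3$ (for the purposes of this lemma even $\log(1+x)\le x$ suffices, since the quadratic contributions of the mixture groups are discarded anyway), and then Taylor-expand $f(y_{2j};\bar\mu_h,\bar\sigma_h)$ to first order so that $\sum_j\delta_j=\{\bar{\bm m}(\bar\theta)\}^\tau\bm a_2+\varepsilon_n$ with $\varepsilon_n=o_p(n)\{\|\bar{\bm m}(0)\|^2+\|\bar{\bm m}(1)\|^2\}$, the remainder being controlled by the dominating function $g$ of Condition A4 exactly as you describe for the pure groups. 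With that substitution your remaining steps (the treatment of groups $1$ and $4$, the positive definiteness of $\bm A$ via Condition A5 combined with $p_1,p_4>0$, and the final quadratic inequality yielding $\|\bm\delta\|=O_p(n^{-1/2})$) are correct and coincide with the paper's argument.
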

\begin{proof}
Let
\begin{eqnarray*}
 &&\hspace{-0.4in}l_{n1}(\mu_1,\sigma_1)=\sum\limits_{j=1}^{n_1}\log f_1(y_{1j}), ~
 l_{n2}(\theta, \mu_1,\mu_2,\sigma_1,\sigma_2)=\sum\limits_{j=1}^{n_{2}}\log
 \{ \theta f_1(y_{2j})+(1-\theta)f_2(y_{2j}) \},\\
  &&\hspace{-0.4in}l_{n3}(\theta, \mu_1,\mu_2,\sigma_1,\sigma_2)=
\sum\limits_{j=1}^{n_{3}}\log \{ (1-\theta) f_1(y_{3j})
+\theta f_2(y_{3j}) \},~
 l_{n4}(\mu_2,\sigma_2)=\sum\limits_{j=1}^{n_{4}}\log f_2(y_{4j}).
\end{eqnarray*}
Then
$$
l_n(\theta,\mu_1, \mu_2,\sigma_1,\sigma_2)
=l_{n1}(\mu_1,\sigma_1)+l_{n2}(\theta, \mu_1,\mu_2,\sigma_1,\sigma_2)+
l_{n3}(\theta, \mu_1,\mu_2,\sigma_1,\sigma_2)+ l_{n4}(\mu_2,\sigma_2).
$$
Next we derive an upper bound for $l_n(\bar\theta, \bar\mu_1, \bar\mu_2,\bar\sigma_1,\bar\sigma_2)-l_n(0.5,0,0,1,1)$. Together
with the lower bound $c$, we get the order assessment of $\bar\mu_h$ and $\bar\sigma_h$, $h=1, 2$.

From  Lemma \ref{lem1}, we have the consistency $\bar\mu_h=o_p(1)$ and $\bar\sigma_h-1=o_p(1)$, $h=1,2$.
Applying a second-order Taylor expansion to $l_{n1}(\bar\mu_1,\bar\sigma_1)-l_{n1}(0,1)$ around $(0,1)$
and using the law of large numbers, we get
\begin{equation}
\label{first.group}
l_{n1}(\bar\mu_1,\bar\sigma_1)-l_{n1}(0,1)
=\left\{\bar{\bm{m}}(1)\right\}^\tau \bm{a}_1-\frac{n_1}{2}\left\{\bar{\bm{m}}(1)\right\}^\tau \bm{A} \left\{\bar{\bm{m}}(1)\right\} \{1+o_p(1)\}.
\end{equation}
Similarly,
\begin{equation}
\label{forth.group}
l_{n4}(\bar\mu_2,\bar\sigma_2)-l_{n4}(0,1)
=\left\{\bar{\bm{m}}(0)\right\}^\tau \bm{a}_4-\frac{n_4}{2}\left\{\bar{\bm{m}}(0)\right\}^\tau \bm{A} \left\{\bar{\bm{m}}(0)\right\} \{1+o_p(1)\}.
\end{equation}

We now find an upper bound for
$l_{n2}(\bar \theta, \bar \mu_1,\bar \mu_2,\bar \sigma_1,\bar \sigma_2)-l_{n2}(0.5, 0,0,1,1)$.
Write $$
l_{n2}(\bar \theta, \bar \mu_1,\bar \mu_2,\bar \sigma_1,\bar \sigma_2)-l_{n2}(0.5, 0,0,1,1)
=\sum_{j=1}^{n_2}\log(1+\delta_j )
$$
with
$$
\delta_j
=\frac{\bar\theta\left\{f(y_{2j};\bar\mu_1,\bar\sigma_1)-f(y_{2j};0,1)\right\}+(1-\bar\theta)\left\{f(y_{2j};\bar\mu_2,\bar\sigma_2)-f(y_{2j};0,1)\right\}}{f(y_{2j};0,1)}.
$$
By the inequality $\log(1+x)\leq x-x^2/2+x^3/3$, we have
\begin{equation}
\label{ln2.upper1}
l_{n2}(\bar \theta, \bar \mu_1,\bar \mu_2,\bar \sigma_1,\bar \sigma_2)-l_{n2}(0.5, 0,0,1,1)
\leq \sum_{j=1}^{n_2}\delta_j-\sum_{j=1}^{n_2}\delta_j^2/2+\sum_{j=1}^{n_2}\delta_j^3/3.
\end{equation}
Applying a first-order Taylor expansion to  $f(y_{2j};\bar\mu_h,\bar\sigma_h)$, $h=1,2$,
we find that
$$
\delta_{j}
=
\left\{\bar {\bm{m}}(\bar\theta)\right\}^\tau \bm{a}_{2j}+ \varepsilon_{nj}
$$
and
the remainder term $\varepsilon_n=\sum_{j=1}^{n_2}\varepsilon_{nj}$
satisfies
\begin{equation}
\label{error.order}
\varepsilon_n=O_p(n_2^{1/2})\sum_{h=1}^2\left\{\mu_h^2+(\bar\sigma_h-1)^2\right\}
=o_p(n)\left\{||\bar{\bm{m}}(0) ||^2+||\bar{\bm{m}}(1) ||^2\right\}.
\end{equation}
Here $||\bar{\bm{m}}(0) ||$ and $||\bar{\bm{m}}(1) ||$ denote the $L_2$ norms of $\bar{\bm{m}}(0)$
and $\bar{\bm{m}}(1)$, respectively.
Therefore, for the linear term in (\ref{ln2.upper1}),
we have
\begin{equation}
\label{ln2.linear}
\sum_{j=1}^{n_2}\delta_j= \left\{\bar {\bm{m}}(\bar\theta)\right\}^\tau \bm{a}_{2}+\varepsilon_n,
\end{equation}
where the order of $\varepsilon_n$ is assessed in (\ref{error.order}).
After some straightforward algebra,  we find that
 the quadratic and cubic terms in (\ref{ln2.upper1}) satisfy
\begin{eqnarray*}
\sum_{j=1}^{n_2}\delta_j^2&=& \sum_{j=1}^{n_2}
\left[\left\{\bar {\bm{m}}(\bar\theta)\right\}^\tau \bm{a}_{2j}
\right]^2
+O_p(\varepsilon_n),\\
\sum_{j=1}^{n_2}\delta_j^3&=& \sum_{j=1}^{n_2}
\left[\left\{\bar {\bm{m}}(\bar\theta)\right\}^\tau \bm{a}_{2j}
\right]^3
+O_p(\varepsilon_n).
\end{eqnarray*}
By the strong law of large numbers,
 the fact that $||\bar {\bm{m}}(\bar\theta) ||^2\leq ||\bar {\bm{m}}(0) ||^2+||\bar {\bm{m}}(1) ||^2$,
 and the order assessment of $\varepsilon_n$ in (\ref{error.order}),
we have
\begin{eqnarray}\label{ln2.quad}
\sum_{j=1}^{n_2}\delta_j^2&=&n_2\left\{\bar {\bm{m}}(\bar\theta)\right\}^\tau \bm{A} \left\{\bar {\bm{m}}(\bar\theta)\right\}
+o_p(n) \left\{||\bar {\bm{m}}(0) ||^2+||\bar {\bm{m}}(1) ||^2\right\},\\
\label{ln2.cubic}\sum_{j=1}^{n_2}\delta_j^3&=& o_p(n) \left\{||\bar {\bm{m}}(0) ||^2+||\bar {\bm{m}}(1) ||^2\right\}.
\end{eqnarray}

Combining (\ref{ln2.upper1})--(\ref{ln2.cubic}), we get the refined upper bound for
$l_{n2}(\bar\theta, \bar \mu_1,\bar \mu_2,\bar \sigma_1,\bar \sigma_2)-l_{n2}(0.5, 0,0,1,1)$ as follows:
\begin{eqnarray}
\nonumber
&&l_{n2}(\bar\theta, \bar \mu_1,\bar \mu_2,\bar \sigma_1,\bar \sigma_2)-l_{n2}(0.5, 0,0,1,1)\\
&\leq&\left\{\bar {\bm{m}}(\bar\theta)\right\}^\tau\bm{a}_{2}-\frac{n_2}{2}\left\{\bar {\bm{m}}(\bar\theta)\right\}^\tau \bm{A} \left\{\bar {\bm{m}}(\bar\theta)\right\} +o_p(n)\left\{||\bar {\bm{m}}(0) ||^2+||\bar {\bm{m}}(1) ||^2\right\}.\nonumber\\
\label{second.group}
\end{eqnarray}
Similarly,
\begin{eqnarray}
\nonumber
&&l_{n3}(\bar \theta, \bar \mu_1,\bar \mu_2,\bar \sigma_1,\bar \sigma_2)-l_{n3}(0.5, 0,0,1,1)\nonumber\\
&\leq&\left\{\bar {\bm{m}}(1-\bar\theta)\right\}^\tau\bm{a}_{3}-\frac{n_3}{2}\left\{\bar {\bm{m}}(1-\bar\theta)\right\}^\tau \bm{A} \left\{\bar {\bm{m}}(1-\bar\theta)\right\}\nonumber\\
&&+o_p(n)\left\{||\bar {\bm{m}}(0) ||^2+||\bar {\bm{m}}(1) ||^2\right\}.\label{third.group}
\end{eqnarray}
Combining (\ref{first.group}), (\ref{forth.group}), (\ref{second.group}), and (\ref{third.group}), we have
\begin{eqnarray}
c&\leq& l_{n}(\bar \theta, \bar \mu_1,\bar \mu_2,\bar \sigma_1,\bar \sigma_2)-l_{n}(0.5,0,0,1,1)\nonumber\\
&\leq& \left\{\bar{\bm{m}}(1)\right\}^\tau \bm{a}_1-\frac{n_1}{2}\left\{\bar{\bm{m}}(1)\right\}^\tau \bm{A} \left\{\bar{\bm{m}}(1)\right\} \{1+o_p(1)\}\nonumber\\
&&+ \left\{\bar {\bm{m}}(\bar\theta)\right\}^\tau\bm{a}_{2}-\frac{n_2}{2}\left\{\bar {\bm{m}}(\bar\theta)\right\}^\tau \bm{A} \left\{\bar {\bm{m}}(\bar\theta)\right\}\nonumber\\
&& +\left\{\bar {\bm{m}}(1-\bar\theta)\right\}^\tau\bm{a}_{3}-\frac{n_3}{2}\left\{\bar {\bm{m}}(1-\bar\theta)\right\}^\tau \bm{A} \left\{\bar {\bm{m}}(1-\bar\theta)\right\}\nonumber\\
&&+\left\{\bar{\bm{m}}(0)\right\}^\tau \bm{a}_4-\frac{n_4}{2}\left\{\bar{\bm{m}}(0)\right\}^\tau \bm{A} \left\{\bar{\bm{m}}(0)\right\} \{1+o_p(1)\}\label{upper.bound1}\\
&\leq& \left\{\bar{\bm{m}}(1)\right\}^\tau \bm{a}_1-\frac{n_1}{2}\left\{\bar{\bm{m}}(1)\right\}^\tau \bm{A} \left\{\bar{\bm{m}}(1)\right\} \{1+o_p(1)\}\nonumber\\
&&+\left\{\bar{\bm{m}}(0)\right\}^\tau \bm{a}_4-\frac{n_4}{2}\left\{\bar{\bm{m}}(0)\right\}^\tau \bm{A} \left\{\bar{\bm{m}}(0)\right\} \{1+o_p(1)\}+O_p(1).
\label{upper.bound2}
\end{eqnarray}
By Condition A5, $\bm{A}$ is positive definite. Then the
upper bound of $ l_{n}(\bar \theta, \bar \mu_1,\bar \mu_2,\bar \sigma_1,\bar \sigma_2)-l_{n}(0.5,0,0,1,1)$ in (\ref{upper.bound2})  has order $O_p(1)$.
Together with the lower bound $c$, this implies that
$$
\bar\mu_1=O_p(n^{-1/2}),~
\bar\sigma_1-1=O_p(n^{-1/2}),~
\bar\mu_2=O_p(n^{-1/2}),~
\bar\sigma_2-1=O_p(n^{-1/2}).
$$
Any values of $(\bar\mu_1,\bar\sigma_1-1,\bar\mu_2,\bar\sigma_2-1)$ outside
this range will violate
the inequality.
This completes the proof.
\end{proof}

\subsection*{Proof of Theorem \ref{lrt-mean-var}}

\noindent{\it Proof of Part (i).}

By the definition of the MLE,
we  have $l_n(\hat\theta,\hat\mu_1,\hat\mu_2,\hat\sigma_1,\hat\sigma_2)-l_n(0.5, 0, 0, 1, 1) \geq0$.
Hence, Condition (\ref{key-cond-lemma1}) is satisfied. Then
applying the results in Lemma \ref{lem2}, we obtain the results in  Part (i).

\vspace{0.1in}
\noindent{\it Proof of Part (ii).}

Note that
\begin{eqnarray}
\label{LLR}
R_n=2\left\{l_n(\hat\theta,\hat\mu_1,\hat\mu_2,\hat\sigma_1,\hat\sigma_2 )-l_n(0.5,\hat\mu_0,\hat\mu_0,\hat\sigma_0,\hat\sigma_0) \right\}=R_{1n}-R_{2n},
\end{eqnarray}
where $$R_{1n}=2\left\{l_n(\hat\theta,\hat\mu_1,\hat\mu_2,\hat\sigma_1,\hat\sigma_2 )-l_n(0.5,0,0,1,1)\right\}$$
and $$R_{2n}=2\left\{ l_n(0.5,\hat\mu_0,\hat\mu_0,\hat\sigma_0,\hat\sigma_0)-l_n(0.5,0,0,1,1)\right\}. $$

Applying some of the classical results for regular models (\citealp{Serfling:1980}), we have
\begin{equation}
\label{LLR2}
R_{2n}=\bm {a}^\tau (n\bm{A})^{-1}\bm {a}+o_p(1).
\end{equation}

Next we use a sandwich method to find the approximation of $R_{1n}$.
We proceed in two steps. In Step 1, we derive an upper bound for $R_{1n}$
and in Step 2, we argue that the upper bound is achievable.

By Part (i), the results in (\ref{upper.bound1}) are applicable to $R_{1n}$.
Hence,
\begin{eqnarray}
R_{1n}&=&2\left\{l_n(\hat\theta,\hat\mu_1,\hat\mu_2,\hat\sigma_1,\hat\sigma_2 )-l_n(0.5,0,0,1,1)\right\}\nonumber\\
&\leq& 2\left\{\hat{\bm{m}}(1)\right\}^\tau \bm{a}_1-n_1\left\{\hat{\bm{m}}(1)\right\}^\tau \bm{A} \left\{\hat{\bm{m}}(1)\right\} \{1+o_p(1)\}\nonumber\\
&&+ 2\left\{\hat {\bm{m}}(\hat\theta)\right\}^\tau\bm{a}_{2}-n_2\left\{\hat {\bm{m}}(\hat\theta)\right\}^\tau \bm{A} \left\{\hat {\bm{m}}(\hat\theta)\right\}\nonumber\\
&& +2\left\{\hat {\bm{m}}(1-\hat\theta)\right\}^\tau\bm{a}_{3}-n_3\left\{\hat {\bm{m}}(1-\hat\theta)\right\}^\tau \bm{A} \left\{\hat {\bm{m}}(1-\hat\theta)\right\}\nonumber\\
&&+2\left\{\hat{\bm{m}}(0)\right\}^\tau \bm{a}_4-n_4\left\{\hat{\bm{m}}(0)\right\}^\tau \bm{A} \left\{\hat{\bm{m}}(0)\right\} \{1+o_p(1)\}.\label{upper.bound3}
\end{eqnarray}
Here $\hat {\bm{m}}(\theta)$  is defined similarly to (\ref{barm}) with
$(\hat\mu_1,\hat\mu_2,\hat\sigma_1,\hat\sigma_2)$
in place of   $(\bar\mu_1,\bar\mu_2,\bar\sigma_1,\bar\sigma_2)$.

To simplify the upper bound in (\ref{upper.bound3}), let
\begin{equation*}
\hspace{-0.1in}
\hat\bbeta_1=\frac{\hat {\bm{m}}(1)+\hat {\bm{m}}(0)}{2}
=
\begin{pmatrix}
\frac{\hat\mu_1+\hat\mu_2}{2}\\
\frac{\hat\sigma_1+\hat\sigma_2-2}{2}
\end{pmatrix}
\mbox{ and }
\hat\bbeta_2=\frac{\hat {\bm{m}}(1)-\hat {\bm{m}}(0)}{2}
=\begin{pmatrix}
\frac{\hat\mu_1-\hat\mu_2}{2}\\
\frac{\hat\sigma_1-\hat\sigma_2}{2}
\end{pmatrix}
.
\end{equation*}
Then
\begin{equation}
\label{hatm}
\hat {\bm{m} }(\theta)=\hat\bbeta_1+2(\theta-0.5)\hat\bbeta_2.
\end{equation}
Substituting (\ref{hatm}) into (\ref{upper.bound3}) and using Part (i) and the condition that $n_i/n\to p_i$, $i=1,2,3,4$ with
$$(p_1,p_2,p_3,p_4)=\left(\frac{1-r}{2}, \frac{r}{2}, \frac{r}{2}, \frac{1-r}{2} \right),$$
we have
\begin{eqnarray}
R_{1n}&=&2\left\{l_n(\hat\theta,\hat\mu_1,\hat\mu_2,\hat\sigma_1,\hat\sigma_2 )-l_n(0.5,0,0,1,1)\right\}\nonumber\\
&\leq& 2\hat\bbeta_1^\tau \bm{a}-n\hat\bbeta_1^\tau \bm{A} \hat\bbeta_1
\{1+o_p(1)\}+ 2\hat\bbeta_2^\tau \bm{b}(\hat\theta)-n\tau(\hat\theta)\hat\bbeta_2^\tau \bm{A} \hat\bbeta_2 \nonumber\\
 &\leq&\bm{a}^\tau(n\bm{A})^{-1}\bm{a}+\left\{\bm{b}(\hat\theta)\right\}^\tau\left\{n\tau(\hat\theta)\bm{A} \right\}^{-1}\left\{\bm{b}(\hat\theta)\right\}+o_p(1)\nonumber\\
 &\leq&\bm{a}^\tau(n\bm{A})^{-1}\bm{a}+\sup_{\theta\in[0,1]}\left[\left\{\bm{b}(\theta)\right\}^\tau\left\{n\tau(\theta)\bm{A} \right\}^{-1}\left\{\bm{b}(\theta)\right\}\right]+o_p(1) \label{upper.bound4},
\end{eqnarray}
where
$$
\bm{b}(\theta)=\bm{a}_1-\bm{a}_4+(2\theta-1)(\bm{a}_2-\bm{a}_3)
\mbox{  and  }
\tau(\theta)=1+4r\theta(\theta-1).
$$

Next, we show that the upper bound in (\ref{upper.bound4}) for  $R_{1n}$ is
achievable.
Let
$$
\tilde\theta=\arg\max_{\theta\in[0,1]}\left[
\left\{\bm{b}(\theta)\right\}^\tau\left\{n\tau(\theta)\bm{A} \right\}^{-1}\left\{\bm{b}(\theta)\right\}
\right]
$$
and $(\tilde\mu_1,\tilde\mu_2,\tilde\sigma_1,\tilde\sigma_2)$
be determined by
$$
\begin{pmatrix}
\frac{\tilde\mu_1+\tilde\mu_2}{2}\\
\frac{\tilde\sigma_1+\tilde\sigma_2-2}{2}
\end{pmatrix}
=(n\bm{A})^{-1}\bm{a}
~
\mbox{ and }
~
\begin{pmatrix}
\frac{\tilde\mu_1-\tilde\mu_2}{2}\\
\frac{\tilde\sigma_1-\tilde\sigma_2}{2}
\end{pmatrix}
=
\left\{n\tau(\tilde\theta)\bm{A} \right\}^{-1}
\left\{\bm{b}(\tilde\theta)\right\}.
$$
Note that it is easy to verify that $(\tilde\mu_1,\tilde\mu_2,\tilde\sigma_1,\tilde\sigma_2)$ exists and
$$
\tilde \mu_h=O_p(n^{-1/2}),~~
\tilde\sigma_h-1=O_p(n^{-1/2}),~~h=1,2.
$$
With this order assessment and applying a second-order Taylor expansion, we
have
\begin{eqnarray}
\nonumber
R_{1n}&\geq&2\left\{l_n(\tilde\theta, \tilde\mu_1,\tilde\mu_2,\tilde\sigma_1,\tilde\sigma_2)
-l_n(0.5,0,0,1,1)\right\}\\
&=& \bm{a}^\tau(n\bm{A})^{-1}\bm{a}+
\sup_{\theta\in[0,1]}\left[
\left\{\bm{b}(\theta)\right\}^\tau\left\{n\tau(\theta)\bm{A} \right\}^{-1}\left\{\bm{b}(\theta)\right\}
\right] +o_p(1).
\label{lower.r1n}
\end{eqnarray}
Combining \eqref{upper.bound4} and \eqref{lower.r1n} leads to
$$
R_{1n}= \bm{a}^\tau(n\bm{A})^{-1}\bm{a}+
\sup_{\theta\in[0,1]}\left[
\left\{\bm{b}(\theta)\right\}^\tau\left\{n\tau(\theta)\bm{A} \right\}^{-1}\left\{\bm{b}(\theta)\right\}
\right] +o_p(1).
$$
With (\ref{LLR}) and (\ref{LLR2}), we further have that
\begin{equation}
\label{lower.rn}
R_n= \sup_{\theta\in[0,1]}\left[
\left\{\bm{b}(\theta)\right\}^\tau\left\{n\tau(\theta)\bm{A} \right\}^{-1}\left\{\bm{b}(\theta)\right\}
\right] +o_p(1).
\end{equation}

It  can be verified that the process
$\left\{\bm{b}(\theta)\right\}^\tau\left\{n\tau(\theta)\bm{A} \right\}^{-1}\left\{\bm{b}(\theta)\right\}$
converges weakly to the process $Z_1^2(\theta)+Z_2^2(\theta)$, where $Z_1(\theta)$ and $Z_2(\theta)$
are defined in (\ref{zix}) (see \citealp{Kim:2013}).
Hence,
$$
R_n\to R=\sup_{0\leq \theta\leq 1}\{ Z^2_1(\theta)+Z^2_2(\theta)\}
$$
in distribution, as $n\rightarrow \infty$. This completes the proof of Part (ii).
\vspace{0.1in}

\noindent{\it Proof of (\ref{rep1}).}

Let
$
c_1(\theta)=
\sqrt{1-r}{\big/}\sqrt{1+ 4r\theta (\theta-1)}
$ and
$c_2(\theta)=\sqrt{r}(2\theta-1){\big/}\sqrt{1+ 4r\theta(\theta-1)}.
$
Recall the forms of $Z_1(\theta)$ and $Z_2(\theta)$
defined in (\ref{zix}).
Then $R$ can be written as
\begin{align}
R
&=Z^2_1(\theta)+Z^2_2(\theta)
= \{c_1(\theta)z_{11}+c_2(\theta)z_{12}\}^2
+ \{ c_1(\theta)z_{21} + c_2(\theta)z_{22}\}^2\nonumber\\
&=\begin{pmatrix}
c_1(\theta), &c_2(\theta)
\end{pmatrix}
\begin{pmatrix}
z_{11}^2+z^2_{21}  &z_{11}z_{12}+z_{21}z_{22}\\
z_{11}z_{12}+z_{21}z_{22} &z_{12}^2+z^2_{22}
\end{pmatrix}
\begin{pmatrix}
c_1(\theta)\\
c_2(\theta)
\end{pmatrix},
\label{square.sum}
\end{align}
where  $z_{11}$, $z_{12}$, $z_{21}$,
and $z_{22}$ are four independent standard normal
variables.

Similarly to Lemma 3 of \cite{Zhang:2008}, we have
$$
\left\{\big(c_1(\theta),c_2(\theta)\big):0\leq \theta \leq 1\right\}=\{(x_1,x_2):
x_1^2+x_2^2=1, x_1\geq\sqrt{1-r}\}.
$$
Let
$$
\bm{W}=
\begin{pmatrix}
z_{11}^2+z^2_{21}  &z_{11}z_{12}+z_{21}z_{22}\\
z_{11}z_{12}+z_{21}z_{22} &z_{12}^2+z^2_{22}
\end{pmatrix}, ~~\mathscr{B}=\{(x_1,x_2):
x_1^2+x_2^2=1, x_1\geq\sqrt{1-r}\}.
$$
Then, from \eqref{square.sum} we have
\begin{equation}
\label{form.R}
R=\sup_{(x_1,x_2)\in\mathscr{B}}(x_1,x_2)\bm{W}
(x_1,x_2)^\tau.
\end{equation}

To help us find the maximum of $(x_1,x_2)\bm{W}
(x_1,x_2)^\tau$, we make the following polar transformations.
Let
$$
\begin{cases}
(z_{11}-z_{22}){\big/}\sqrt{2}=\rho_1\sin U_1\\
(z_{12}+z_{21}){\big/}\sqrt{2}=\rho_1\cos U_1
\end{cases} \quad \mbox{and} \quad~~~
\begin{cases}
(z_{11}+z_{22}){\big/}\sqrt{2}=\rho_2\cos U_2\\
(z_{21}-z_{12}){\big/}\sqrt{2}=\rho_2\sin U_2
\end{cases},
$$
where   $\rho_1^2$ with $\rho_1>0$, $\rho_2^2$ with $\rho_2>0$, $U_1$, and $U_2$ are four independent
random variables with $\rho_1^2$ and $\rho_2^2$ from a $\chi_2^2$ distribution, and
$U_1$ and $U_2$ from a uniform distribution on $[-3\pi /4, 5\pi/4]$.

It can be verified that the two eigenvalues of $\bm{W}$ are
$\lambda_1=(1/2)(\rho_1+\rho_2)^2$ and
$\lambda_2=(1/2)(\rho_1-\rho_2)^2$, respectively.
Further, $\bm{W}$ can be decomposed as
\begin{equation}
\label{decomposition}
\bm{W}=\bm{P}^\tau
\begin{pmatrix}
\lambda_1 &0\\
0  & \lambda_2
\end{pmatrix}
\bm{P},
\end{equation}
where $$\bm{P}=\begin{pmatrix}
\cos\{(U_1+U_2)/2-\pi/4\}
&-\sin\{(U_1+U_2)/2-\pi/4\}\\
\sin\{(U_1+U_2)/2-\pi/4\}
&\cos\{(U_1+U_2)/2-\pi/4\}
\end{pmatrix}.$$

Since $\bm{P}$ is an orthogonal transformation and   $ x_1^2+x_2^2=1$,
it follows that $(x_1, x_2)\bm{P}^\tau \bm{P}(x_1, x_2)^\tau = x_1^2+x_2^2=1$. Therefore,
we can write $(x_1, x_2)\bm{P}^\tau =(\cos\alpha,\sin\alpha)$ with $\alpha\in[-\pi, \pi]$.
From \eqref{form.R} and \eqref{decomposition}, we have
$$
R=\sup_{\alpha\in\mathscr{F}}
(\lambda_1\cos^2\alpha+\lambda_2\sin^2\alpha)
=\sup_{\alpha\in\mathscr{F}}\{(\lambda_1-\lambda_2)\cos^2\alpha+\lambda_2\},
$$
where
$$
\mathscr{F}=\{\alpha:
\cos\{(U_1+U_2)/2-\pi/4\}\cos\alpha+
\sin\{(U_1+U_2)/2-\pi/4\}\sin\alpha\geq\sqrt{1-r}\}.$$
Recall that $\eta=(U_1+U_2)/2-\pi/4$, which satisfies $\eta\in[-\pi,\pi]$.
Hence,
$$
\mathscr{F}=\{\alpha:
\cos(\alpha-\eta)\geq\sqrt{1-r}\}.$$
Therefore, after some simple
analysis, (\ref{rep1}) follows.
This completes the proof of Theorem \ref{lrt-mean-var}.

\subsection*{Proof of Theorem \ref{thm3}}

\noindent{\it Proof of Part (i).}

Assume $(\mu_0,\sigma_0)=(0,1)$, then the local alternative \eqref{alternative} is equivalent to
\begin{equation*}
H_{A}^{n}: \theta=\theta_0, ~
\begin{pmatrix}\mu_1\\ \mu_2\end{pmatrix}=
\begin{pmatrix}-n^{-1/2}\delta_{\mu}/\sigma_0\\
n^{-1/2}\delta_{\mu}/\sigma_0\end{pmatrix}, ~
\begin{pmatrix}\sigma_1\\ \sigma_2\end{pmatrix}=
\begin{pmatrix}1-n^{-1/2}\delta_{\sigma}/\sigma_0\\
1+n^{-1/2}\delta_{\sigma}/\sigma_0\end{pmatrix}.
\end{equation*}
By \eqref{lower.rn}, $R_n=\sup_{\theta\in[0,1]}\{R_n(\theta)\}+o_p(1)$ under the null model, where
$$
R_n(\theta)=
\left\{\bm{b}(\theta)\right\}^\tau\left\{n\tau(\theta)\bm{A} \right\}^{-1}\left\{\bm{b}(\theta)\right\}.
$$
Next we derive the limiting distribution of $R_n$ under $H_{A}^n$.

Let $\Lambda_n=\Lambda_{n1}+\Lambda_{n2}+\Lambda_{n3}+\Lambda_{n4}$,
where
\begin{align*}
&\Lambda_{n1}=l_{n1}(\mu_1,\sigma_1)-l_{n1}(0,1),~
\Lambda_{n2}=l_{n2}(\theta_0,\mu_1,\mu_2,\sigma_1,\sigma_2)-
l_{n3}(0.5,0,0,1,1),\\
&\Lambda_{n3}=l_{n3}(\theta_0,\mu_1,\mu_2,\sigma_1,\sigma_2)-
l_{n3}(0.5,0,0,1,1), ~\mbox{and}~
\Lambda_{n4}=l_{n4}(\mu_1,\sigma_1)-l_{n4}(0,1).
\end{align*}
Using the second Taylor  expansion, under the null, we have
\begin{align*}
&\Lambda_{n1}=
-n^{-1/2}\sum_{j=1}^{n_1}(\Delta_{\mu}T_{1j}+\Delta_{\sigma}U_{1j})
-0.5p_1(\Delta_{\mu},\Delta_{\sigma}){\bm A}
\begin{pmatrix}\Delta_{\mu}\\ \Delta_{\sigma}\end{pmatrix}+o_p(1),\\
&\Lambda_{n4}=
n^{-1/2}\sum_{j=1}^{n_4}(\Delta_{\mu}T_{4j}+\Delta_{\sigma}U_{4j})
-0.5p_4(\Delta_{\mu},\Delta_{\sigma}){\bm A}
\begin{pmatrix}\Delta_{\mu}\\ \Delta_{\sigma}\end{pmatrix}+o_p(1),
\end{align*}
where $\Delta_{\mu}=\delta_{\mu}/\sigma_0$ and
$\Delta_{\sigma}=\delta_{\sigma}/\sigma_0$.
Similarly,
\begin{align*}
\Lambda_{n2}=&
n^{-1/2}\sum_{j=1}^{n_2}\{m_1(\theta_0)T_{2j}+m_2(\theta_0)U_{2j}\}
\\
&
-0.5p_2\big(m_1(\theta_0),m_2(\theta_0)\big){\bm A}
\begin{pmatrix}m_1(\theta_0)\\ m_2(\theta_0)\end{pmatrix}+o_p(1),
\end{align*}
and
\begin{align*}
\Lambda_{n3}=&~
n^{-1/2}\sum_{j=1}^{n_3}\{m_1(1-\theta_0)T_{3j}+m_2(1-\theta_0)U_{3j}\}
\\
&~
-0.5p_3\big(m_1(1-\theta_0),m_2(1-\theta_0)\big){\bm A}
\begin{pmatrix}m_1(1-\theta_0)\\ m_2(1-\theta_0)\end{pmatrix}+o_p(1),
\end{align*}
where $m_1(\theta)=(1-2\theta)\Delta_{\mu}$, and $m_2(\theta)=(1-2\theta)\Delta_{\sigma}$.

By the central limit theorem, we get
$\Lambda_n\stackrel{d}\rightarrow N(-0.5c_0,c_0)$ under the null, where
\begin{align*}
c_0=&~2p_1\left(\Delta_{\mu},\Delta_{\sigma}\right){\bm A}
\begin{pmatrix}\Delta_{\mu}\\ \Delta_{\sigma}\end{pmatrix}
+2p_2\big(m_1(\theta_0),m_2(\theta_0)\big){\bm A}
\begin{pmatrix}m_1(\theta_0)\\ m_2(\theta_0)\end{pmatrix}.
\end{align*}
Therefore, the local alternative $H_{A}^n$ is contiguous to the null distribution (\citealp{Le Cam:1990} and Example 6.5 of \citealp{van der Vaart:2000}). By Le Cam's contiguity theory, the limiting distribution of $R_n(\theta)$ under $H_{A}^n$ is determined by the joint limiting distribution of $\{n\tau(\theta){\bm A}\}^{-1/2}\bm b(\theta)$ and $\Lambda_n$ under the null model.

By the central limit theorem and Slutsky's theorem, the joint limiting distribution of $\{n\tau(\theta){\bm A}\}^{-1/2}\bm b(\theta)$ and $\Lambda_n$ under the null model is multivariate normal
$$
\mathcal{N}_3\left(
\begin{pmatrix}
{\bf 0}\\
-0.5c_0
\end{pmatrix},~
\begin{pmatrix}
\bm I_2 &\bm\rho_{\theta_0}(\theta)\\
\bm\rho_{\theta_0}^\tau(\theta)  &c_0
\end{pmatrix}
\right) ~{\rm with}~
\bm\rho_{\theta_0}(\theta)=\bm\Delta_{\theta_0}(\theta)
\begin{pmatrix}
\Delta_{\mu}\\
\Delta_{\sigma}
\end{pmatrix},
$$
where $\bm \Delta_{\theta_0}(\theta)=-\{1+2r(2\theta_0\theta-\theta_0-\theta)\}\{\tau(\theta)\}^{-\frac{1}{2}}
\bm A^{\frac{1}{2}}$.
By Le Cam's third lemma (\citealp{van der Vaart:2000}), we have under $H_{A}^{n}$,
$$
\{n\tau(\theta){\bm A}\}^{-1/2}\bm b(\theta)
\stackrel{d}\rightarrow \mathcal{N}_2
\left(\bm\rho_{\theta_0}(\theta), \bm I_2\right).
$$
Further, we have $R_n(\theta)\stackrel{d}\rightarrow
\chi_2^2\big(\bm\rho_{\theta_0}^\tau(\theta)\bm\rho_{\theta_0}(\theta)\big)$ under $H_{A}^n$. Because
$R_n=\sup_{\theta\in[0,1]}\{R_n(\theta)\}+o_p(1)$ under the null model,
by Le Cam's first lemma (\citealp{van der Vaart:2000}),
$R_n=\sup_{\theta\in[0,1]}\{R_n(\theta)\}+o_p(1)$ still holds under the local alternative $H_{A}^n$. Therefore, the limiting distribution
of $R_n$ under $H_{A}^n$ is $\sup_{\theta\in[0,1]}\left\{\chi_2^2\big(\bm \rho_{\theta_0}^\tau(\theta)\bm\rho_{\theta_0}(\theta)\big)\right\}$.

\vspace{0.1in}
\noindent{\it Proof of Part (ii).}

The proof of this part is similar to  that of  Part
(i)  and hence is omitted.

\begin{figure}[bt]
\centering
\includegraphics[width=10cm]{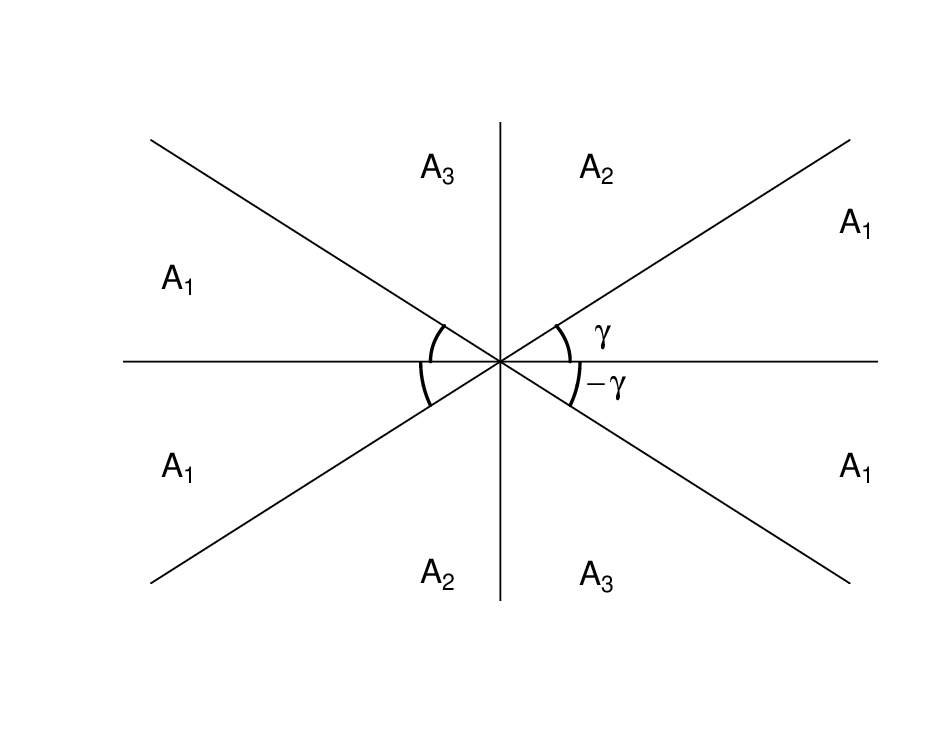}
\vspace{-4.5em}
\caption{Graphical presentation of the sets $A_1$, $A_2$, and $A_3$ defined in
(5).}
\label{figure1}
\end{figure}

\begin{table}[!ht]
\setlength{\abovecaptionskip}{-0.03pt}
\tabcolsep=1.36mm
\centering
\caption{\label{typeI1} Empirical type I errors (\%)  of $R_n$, $R_n^*$, $KS$, and $AD$.
Here ``Ours" means our method based on the explicit representations
and ``Davies" means Davies's approximation method.
The random samples are generated from model
\eqref{model} with $f_1=f_2=N(0,1)$ and $f_1=f_2=Logistic(0,1)$, respectively.}
\begin{tabular}{ccccccccccccc}
\hline\hiderowcolors
\multicolumn{13}{c}{$f_1=f_2=N(0,1)$}\\
\cmidrule(lr){1-13}
$\alpha=5\%$   &\multicolumn{6}{c}{$d=5$}
 &\multicolumn{6}{c}{$d=20$}\\
\cmidrule(lr){1-1} \cmidrule(lr){2-7}  \cmidrule(lr){8-13}
$n$ &\multicolumn{2}{c}{$R_n$}&\multicolumn{2}{c}{$R_n^*$}&$KS$ &$AD$
    &\multicolumn{2}{c}{$R_n$}&\multicolumn{2}{c}{$R_n^*$}&$KS$ &$AD$\\
&Ours &Davies &Ours &Davies& &
    &Ours &Davies &Ours &Davies& & \\
50  &5.77 &6.42 &5.18 &5.71 &3.12 &5.31
    &5.77 &6.74 &5.88 &6.84 &4.30 &5.10\\
100 &5.52 &6.04 &5.32 &5.87 &3.48 &5.17
    &5.70 &6.68 &5.46 &6.46 &5.19 &5.41\\
200 &5.08 &5.68 &5.19 &5.59 &3.89 &4.81
    &5.36 &6.15 &4.93 &5.80 &4.64 &4.59\\
300 &4.62 &5.13 &4.95 &5.48 &4.18 &4.83
    &4.51 &5.26 &4.92 &5.78 &4.93 &5.00\\
\cmidrule(lr){1-13}
$\alpha=1\%$   &\multicolumn{6}{c}{$d=5$}
 &\multicolumn{6}{c}{$d=20$}\\
\cmidrule(lr){1-1} \cmidrule(lr){2-7}  \cmidrule(lr){8-13}
$n$ &\multicolumn{2}{c}{$R_n$}&\multicolumn{2}{c}{$R_n^*$}&$KS$ &$AD$
    &\multicolumn{2}{c}{$R_n$}&\multicolumn{2}{c}{$R_n^*$}&$KS$ &$AD$\\
&Ours &Davies &Ours &Davies& &
    &Ours &Davies &Ours &Davies& & \\
50  &1.30 &1.50 &1.30 &1.50 &0.47 &0.77
    &1.20 &1.40 &1.20 &1.38 &0.57 &0.91\\
100 &1.05 &1.17 &1.13 &1.30 &0.40 &0.90
    &1.34 &1.54 &1.33 &1.59 &0.94 &1.22\\
200 &0.92 &1.05 &1.13 &1.34 &0.56 &0.87
    &1.10 &1.21 &0.99 &1.24 &0.92 &0.98\\
300 &0.91 &0.97 &0.90 &1.04 &0.69 &0.84
    &1.00 &1.14 &0.99 &1.16 &0.93 &1.05\\
\hline
\hline
\multicolumn{13}{c}{$f_1=f_2=Logistic(0,1)$}\\
\cmidrule(lr){1-13}
$\alpha=5\%$   &\multicolumn{5}{c}{$d=5$} &\multicolumn{5}{c}{$d=20$}\\
\cmidrule(lr){1-1} \cmidrule(lr){2-7}  \cmidrule(lr){8-13}
$n$ &\multicolumn{2}{c}{$R_n$}&\multicolumn{2}{c}{$R_n^*$}&$KS$ &$AD$
    &\multicolumn{2}{c}{$R_n$}&\multicolumn{2}{c}{$R_n^*$}&$KS$ &$AD$\\
&Ours &Davies &Ours &Davies& &
    &Ours &Davies &Ours &Davies& & \\
50  &5.80 &6.22 &5.25 &5.72 &3.09 &5.34
    &5.89 &6.86 &5.67 &6.77 &3.96 &4.76\\
100 &5.09 &5.63 &5.10 &5.68 &3.59 &5.22
    &5.04 &5.95 &4.91 &5.81 &4.90 &5.21\\
200 &4.99 &5.49 &4.92 &5.58 &4.06 &5.11
    &5.11 &6.04 &5.19 &6.14 &5.11 &5.09\\
300 &5.06 &5.53 &4.97 &5.43 &4.59 &4.97
    &4.99 &5.87 &4.95 &6.03 &5.07 &5.23\\
\cmidrule(lr){1-13}
$\alpha=1\%$   &\multicolumn{6}{c}{$d=5$}
&\multicolumn{6}{c}{$d=20$}\\
\cmidrule(lr){1-1} \cmidrule(lr){2-7}  \cmidrule(lr){8-13}
$n$ &\multicolumn{2}{c}{$R_n$}&\multicolumn{2}{c}{$R_n^*$}&$KS$ &$AD$
    &\multicolumn{2}{c}{$R_n$}&\multicolumn{2}{c}{$R_n^*$}&$KS$ &$AD$\\
&Ours &Davies &Ours &Davies& &
    &Ours &Davies &Ours &Davies& & \\
50  &1.33 &1.52 &1.12 &1.29 &0.37 &0.61
    &1.13 &1.37 &1.16 &1.48 &0.51 &0.69\\
100 &1.00 &1.19 &0.97 &1.14 &0.52 &0.90
    &0.98 &1.15 &0.91 &1.06 &0.76 &0.96\\
200 &1.21 &1.28 &0.95 &1.07 &0.66 &0.97
    &1.24 &1.43 &1.04 &1.17 &0.78 &0.92\\
300 &1.08 &1.25 &0.94 &1.08 &0.75 &0.91
    &1.12 &1.20 &1.06 &1.21 &0.96 &0.94\\
\hline
\end{tabular}
\end{table}

\begin{table}[!ht]
\setlength{\abovecaptionskip}{-0.03pt}
\centering
\caption{\label{power1}Power (\%) comparison of the two LRTs,  $R_n$ and $R_n^*$, $KS$, and $AD$. The random samples are generated from model
\eqref{model}, in which $f_1=N(0,1)$ and $f_2=N(0.5,1)$ in Case I; $f_1=N(0,1)$ and $f_2=N(0.5,1.25^2)$ in Case II; and $f_1=N(0.5,0.75)$ and $f_2=N(0.5,1.25)$ in Case III. The significance level is $\alpha=5\%$.}
\begin{tabular}{cc cccc cccc c}
\hline\hiderowcolors
Case I &$\theta=0.5$  &\multicolumn{4}{c}{$n=100$}
       &\multicolumn{4}{c}{$n=200$} & 100KL\\
\cmidrule(lr){2-2} \cmidrule(lr){3-6} \cmidrule(lr){7-10}\cmidrule(lr){11-11}
&$d$ &$R_n$  &$R_n^*$ &$KS$ &$AD$ &$R_n$ &$R_n^*$ &$KS$ &$AD$&\\
&5   &53.9 &66.9	&50.4 &42.4	&85.8 &90.7	&79.6 &80.0&2.89\\
&20  &45.1 &56.8	&39.4 &37.4	&80.1 &88.1	&73.5 &71.6&2.53\\
\cmidrule(lr){2-11}
&$\theta=0.7$  &\multicolumn{4}{c}{$n=100$}
&\multicolumn{4}{c}{$n=200$} & 100KL\\
\cmidrule(lr){2-2} \cmidrule(lr){3-6}  \cmidrule(lr){7-10}\cmidrule(lr){11-11}
&$d$ &$R_n$  &$R_n^*$ &$KS$ &$AD$ &$R_n$ &$R_n^*$ &$KS$ &$AD$\\
&5   &58.7 &69.6    &48.9 &47.2 &87.2 &92.4 &79.3 &81.3&2.91\\
&20  &46.2 &58.1	&40.1 &42.2	&81.6 &89.0	&74.2 &74.4&2.61\\
\hline
Case II &$\theta=0.5$  &\multicolumn{4}{c}{$n=100$}
&\multicolumn{4}{c}{$n=200$} & 100KL\\
\cmidrule(lr){2-2} \cmidrule(lr){3-6}  \cmidrule(lr){7-10}\cmidrule(lr){11-11}
&$d$ &$R_n$  &$R_n^*$ &$KS$ &$AD$ &$R_n$ &$R_n^*$ &$KS$ &$AD$\\
&5   &62.0 &55.7	&43.9 &38.5	&92.4 &85.0	&74.5 &73.2&3.45\\
&20  &54.1 &47.6	&35.1 &33.3	&87.3 &79.3	&64.3 &67.1&3.05\\
\cmidrule(lr){2-11}
&$\theta=0.7$  &\multicolumn{4}{c}{$n=100$}
&\multicolumn{4}{c}{$n=200$}& 100KL \\
\cmidrule(lr){2-2} \cmidrule(lr){3-6}  \cmidrule(lr){7-10}\cmidrule(lr){11-11}
&$d$ &$R_n$  &$R_n^*$ &$KS$ &$AD$ &$R_n$ &$R_n^*$ &$KS$ &$AD$\\
&5   &61.6 &54.0	&44.5 &40.1	&92.7 &85.2	&73.7 &73.8&3.48\\
&20  &55.2 &50.4	&34.8 &36.8	&89.2 &80.4	&65.7 &69.8&3.13\\
\hline
Case III &$\theta=0.5$  &\multicolumn{4}{c}{$n=100$}
&\multicolumn{4}{c}{$n=200$} & 100KL\\
\cmidrule(lr){2-2} \cmidrule(lr){3-6}  \cmidrule(lr){7-10}\cmidrule(lr){11-11}
&$d$ &$R_n$  &$R_n^*$ &$KS$ &$AD$ &$R_n$ &$R_n^*$ &$KS$ &$AD$\\
&5   &30.3 &4.40	&5.70 &8.00	&59.9 &4.20	&9.90 &11.4&1.54\\
&20  &23.2 &4.60	&6.00 &6.90	&51.8 &5.40	&7.10 &9.00&1.36\\
\cmidrule(lr){2-11}
&$\theta=0.7$  &\multicolumn{4}{c}{$n=100$}
&\multicolumn{4}{c}{$n=200$} & 100KL\\
\cmidrule(lr){2-2} \cmidrule(lr){3-6}  \cmidrule(lr){7-10}\cmidrule(lr){11-11}
&$d$ &$R_n$  &$R_n^*$ &$KS$ &$AD$ &$R_n$ &$R_n^*$ &$KS$ &$AD$\\
&5   &29.4 &5.70	&7.00 &8.10	&59.2 &4.20	&8.60 &12.1&1.55\\
&20  &26.1 &4.10	&6.10 &5.50	&49.5 &4.60	&9.30 &9.90&1.40\\
\hline
\end{tabular}
\end{table}

\begin{table}[!ht]
\setlength{\abovecaptionskip}{-0.03pt}
\centering
\caption{\label{power2}Power (\%) comparison of the two LRTs, $R_n$ and $R_n^*$, $KS$, and $AD$. The random samples are generated from model
\eqref{model}, in which $f_1=Logistic(0,1)$ and $f_2=Logistic(1,1)$ in Case IV; $f_1=Logistic(0,1)$ and $f_2=Logistic(0.8,1.35)$ in Case V; and $f_1=Logistic(0.5,1)$ and $f_2=Logistic(0.5,1.5)$ in Case VI. The significance level is $\alpha=5\%$.}
\begin{tabular}{ccccccccccc}
\hline\hiderowcolors
Case IV &$\theta=0.5$  &\multicolumn{4}{c}{$n=100$}
&\multicolumn{4}{c}{$n=200$}&100KL \\
\cmidrule(lr){2-2} \cmidrule(lr){3-6} \cmidrule(lr){7-10}\cmidrule(lr){11-11}
~ &$d$ &$R_n$  &$R_n^*$ &$KS$ &$AD$ &$R_n$ &$R_n^*$ &$KS$ &$AD$\\
~ &5   &69.4 &78.2	&67.1 &60.9	&95.8 &97.2	&92.5 &91.8&3.83\\
~ &20  &61.0 &69.5	&55.9 &53.9	&88.4 &94.6	&84.8 &86.7&3.36\\
\cmidrule(lr){2-11}
&$\theta=0.7$  &\multicolumn{4}{c}{$n=100$}
&\multicolumn{4}{c}{$n=200$}&100KL  \\
\cmidrule(lr){2-2} \cmidrule(lr){3-6} \cmidrule(lr){7-10}\cmidrule(lr){11-11}
~ &$d$ &$R_n$  &$R_n^*$ &$KS$ &$AD$ &$R_n$ &$R_n^*$ &$KS$ &$AD$\\
~ &5   &68.9 &77.5	&64.0 &60.5	&95.1 &98.0	&93.9 &91.9&3.86\\
~ &20  &59.0 &68.9	&58.2 &53.7	&89.3 &93.8	&88.5 &87.8&3.46\\
\hline
Case V &$\theta=0.5$  &\multicolumn{4}{c}{$n=100$}
&\multicolumn{4}{c}{$n=200$}&100KL  \\
\cmidrule(lr){2-2} \cmidrule(lr){3-6} \cmidrule(lr){7-10}\cmidrule(lr){11-11}
~ &$d$ &$R_n$  &$R_n^*$ &$KS$ &$AD$ &$R_n$ &$R_n^*$ &$KS$ &$AD$\\
~ &5   &62.1 &47.6	&36.7 &36.8	&91.1 &77.8	&69.1 &70.3&3.33\\
~ &20  &55.4 &42.2	&29.8 &35.5	&85.5 &66.8	&56.5 &64.9&2.94\\
\cmidrule(lr){2-11}
&$\theta=0.7$  &\multicolumn{4}{c}{$n=100$}
&\multicolumn{4}{c}{$n=200$} &100KL \\
\cmidrule(lr){2-2} \cmidrule(lr){3-6} \cmidrule(lr){7-10}\cmidrule(lr){11-11}
~ &$d$ &$R_n$  &$R_n^*$ &$KS$ &$AD$ &$R_n$ &$R_n^*$ &$KS$ &$AD$\\
~ &5   &62.5 &45.6	&39.4 &37.0	&93.3 &77.4	&69.5 &72.3&3.36\\
~ &20  &54.1 &37.8	&33.2 &30.2	&86.5 &66.7	&60.7 &64.2&3.02\\
\hline
Case VI &$\theta=0.5$  &\multicolumn{4}{c}{$n=100$}
&\multicolumn{4}{c}{$n=200$} &100KL \\
\cmidrule(lr){2-2} \cmidrule(lr){3-6} \cmidrule(lr){7-10}\cmidrule(lr){11-11}
~ &$d$ &$R_n$  &$R_n^*$ &$KS$ &$AD$ &$R_n$ &$R_n^*$ &$KS$ &$AD$\\
~ &5   &52.9 &5.20	&8.10 &10.3	&85.6 &5.00	&16.6 &23.6&2.78\\
~ &20  &48.2 &6.10	&8.50 &11.3	&77.8 &4.90	&11.2 &19.9&2.45\\
\cmidrule(lr){2-11}
&$\theta=0.7$  &\multicolumn{4}{c}{$n=100$}
&\multicolumn{4}{c}{$n=200$} &100KL \\
\cmidrule(lr){2-2} \cmidrule(lr){3-6} \cmidrule(lr){7-10}\cmidrule(lr){11-11}
~ &$d$ &$R_n$  &$R_n^*$ &$KS$ &$AD$ &$R_n$ &$R_n^*$ &$KS$ &$AD$\\
~ &5   &52.0 &5.50	&9.50 &10.9	&84.1 &7.20	&16.6 &24.1&2.80\\
~ &20  &44.1 &4.70	&7.70 &11.4	&79.4 &5.70	&10.5 &22.3&2.52\\
\hline
\end{tabular}
\end{table}

\begin{table}[ht]
\setlength{\abovecaptionskip}{-0.03pt}
\centering\footnotesize
\tabcolsep=0.8mm
\caption{\label{ks.pvalue} Kolmogorov--Smirnov test for the normality of
the first sample and the fourth sample for each of the 17 intervals.}
\begin{tabular}{ccccccc}
\hline\hiderowcolors
Interval  &RG472--RG246 &RG246--K5 &K5--U10 &U10--RG532 &RG532--W1 &W1--RG173 \\
1st sample &0.6909	&0.7656	&0.5038	&0.4486	&0.9314	&0.8568\\
4th sample &0.7311	&0.7079	&0.7211	&0.612	&0.3305	&0.4123\\
\hline
Interval  &RG173--RZ276	 &RZ276--Amy1B &Amy1B--RG146	
&RG146--RG345 &RG345--RG381	&RG381--RZ19 \\
1st sample &0.7896 &0.9132 &0.9118 &0.9483 &0.9598 &0.8445\\
4th sample &0.3951 &0.5781 &0.5247 &0.8668 &0.9705 &0.9821\\
\hline
Interval  & RZ19--RG690	&RG690--RZ730  &RZ730--RZ801 &RZ801--RG810	&RG810--RG331&\\
1st sample &0.774 &0.6008 &0.996	&0.9736	&0.9632\\
4th sample &0.9328 &0.6159 &0.9753 &0.6082 &0.3977\\
\hline
\end{tabular}
\end{table}

\begin{table}[ht]
\setlength{\abovecaptionskip}{-0.03pt}
\centering\footnotesize
\tabcolsep=0.8mm
\caption{\label{real-data}  $P$-values of $R_n$, $R_n^*$, $KS$, and $AD$ in the 17 intervals.
Here ``$R_n$--Normal" and  ``$R_n^*$--Normal" are the two LRTs under the normal kernel, and ``$R_n$--Logistic" and ``$R_n^*$--Logistic" are the two LRTs under the logistic kernel.
}
\begin{tabular}{ccccccc}
\hline\hiderowcolors
Interval  &RG472--RG246 &RG246--K5 &K5--U10 &U10--RG532 &RG532--W1 &W1--RG173 \\
$R_n$--Normal   &0.683 &0.265 &0.283 &0.209 &0.292 &0.232\\
$R_n^*$--Normal &0.995 &0.943 &0.641 &0.830 &0.906 &0.757\\
$R_n$--Logistic   &0.612 &0.324 &0.311 &0.292 &0.364 &0.370\\
$R_n^*$--Logistic &0.965 &0.945 &0.472 &0.703 &0.785 &0.722\\
$KS$     &0.285 &0.567 &0.290 &0.141 &0.538	&0.520\\
$AD$     &0.369 &0.475 &0.358 &0.083 &0.457 &0.289\\
\hline
Interval  &RG173--RZ276	 &RZ276--Amy1B &Amy1B--RG146	
&RG146--RG345 &RG345--RG381	&RG381--RZ19 \\
$R_n$--Normal      &0.088 &0.835 &0.711 &0.820 &0.573 &0.161\\
$R_n^*$--Normal   &0.684 &0.876 &0.716 &0.697 &0.304 &0.056\\
$R_n$--Logistic     &0.196 &0.823 &0.760 &0.871 &0.650 &0.095\\
$R_n^*$--Logistic   &0.786 &0.599 &0.651 &0.747 &0.354 &0.032\\
$KS$      &0.485 &0.678	&0.521 &0.800 &0.740	 &{0.089}\\
$AD$      &0.116 &0.686 &0.524 &0.631 &0.755 &{0.117}\\
\hline
Interval  & RZ19--RG690	&RG690--RZ730  &RZ730--RZ801 &RZ801--RG810	&RG810--RG331&\\ \hline
$R_n$--Normal    &0 &0 &0 &0 &0 & \\
$R_n^*$--Normal   &0 &0 &0 &0 &0 & \\
$R_n$--Logistic      &0 &0 &0 &0 &0 & \\
$R_n^*$--Logistic  &0 &0 &0 &0 &0 & \\
$KS$      &0.001 &0 &0 &0 &0 &\\
$AD$      &0 &0 &0 &0 &0 &\\
\hline
\end{tabular}
\end{table}

\end{document}